\newtheorem{theorem}{Theorem}
\newtheorem{corollary}{Corollary}
\newtheorem{lemma}{Lemma}
\theoremstyle{remark}
\newcommand{\denselist}{\topsep 0pt\itemsep 0pt}
\newcommand{\tup}[1]{\langle #1 \rangle}
\renewcommand{\models}{\vDash}
\renewcommand{\L}{\mathcal{L}}
\newcommand{\Var}{\text{\bf Var}}
\newcommand{\antimodels}{$\mathrel{\raisebox{.008em}{\mbox{\rotatebox[origin=c]{180}{$\models$}}}}$}
\newcommand{\myequiv}{\mbox{\antimodels$\models$}}
\newcommand{\set}[1]{\left\{ #1 \right\}}
\newcommand{\setdef}[2]{\set{ #1 \, : \, #2}}
\newcommand{\seq}[2]{{#1}_0,{#1}_1, \dots, {#1}_{#2}}
\newcommand{\dans}{\longrightarrow}
\newcommand{\Set}[1]{\{#1\}}
\title{\bf Belief revision and 3-valued logics:\\
Characterization of 19,683 belief change operators}
\author{Nerio Borges$^1$}
\author{Ram\'on Pino P\'erez$^2$}
\affil{$^1$Departamento de Matemáticas, Universidad Simón Bolívar, Caracas, Venezuela. E-mail: {\tt nborges@usb.ve}\\
$^2$Departamento de Matemáticas, Facultad de Ciencias, Universidad de Los Andes, Mérida, Venezuela. E-mail: {\tt pino@ula.ve}}
\date{\today}
\begin{document}
	
\maketitle

\begin{abstract}
	In  most classical models of belief change, epistemic states are represented by theories (AGM) or formulas (Katsuno-Mendelzon) and the new pieces of information by formulas. The Representation Theorem for revision operators says that operators are represented by total preorders. This important representation is exploited by Darwiche and Pearl to shift the notion of epistemic state to a more abstract one, where the paradigm of epistemic state is  indeed that of a total preorder over interpretations. In this work, we introduce a 3-valued logic where the formulas can be identified with a generalisation of total preorders of three levels: a ranking function mapping interpretations into the truth values. Then we analyse  some sort of changes in this kind of structures and give syntactical characterizations of them.
\end{abstract}

\section{Introduction}

Classical propositional logic is the most common choice when studying and modelling  belief change operators (see for instance \cite{Gar92,Pep08,PU10}).
In the classical AGM framework \cite{AGM85,Gar88}, for instance, epistemic states are represented by theories and new pieces of information by propositional formulas. In the Katsuno-Mendelzon (KM) framework
\cite{KM91}, on the other hand, epistemic states as well as new pieces of information are propositional formulas.
In both AGM and KM settings, the new information is intended to express a fact about the world and this new knowledge must always be included in the epistemic state resulting from the revision process.

A very useful tool in order to understand the logical model is the Representation Theorem for revision operators. It says that operators are represented by assignments mapping epistemic states to total preorders and the output is a formula or theory having as models the minimal models of the new piece of information with respect  to the preorder associated to the old epistemic state.  As a matter of fact,  this tool  is exploited by Darwiche and Pearl \cite{DP97} to shift the notion of epistemic state to a more abstract one, where the paradigm of epistemic state is  indeed that of a total preorder over interpretations.
In our view, this work together with Boutilier's Natural Revision \cite{Bou96} is one of the most influential in a series  
of works trying to capture some kinds of semantical behavior via a syntactical characterization \cite{BM06,BMKR06,JT07,KP08,KMP10,MP13,Nay94,Rot09}.

In this work, we follow this tradition but in a new way. The semantical structures of preorders are represented in a language. In order to do that, we introduce a 3-valued logic with modalities where the formulas can be identified with a generalisation of total preorders of three levels: a ranking function mapping interpretations into the truth values.

The typical situation  we want to model here is as follows.
There is an intelligent agent working within a finite propositional 3-valued logic on $n$ variables. Each variable in this logic, as customary, represents an atomic ``fact'' about the world. We chose a 3-valued logic because this agent does not necessarily have an opinion --or knowledge-- about every single atomic fact.

Our agent has classified all the possible worlds, i.e. all the truth assignments into three blocks $L_1,L_2$ and $L_3$. Those in $L_1$ are the most plausible scenarios. Assignments in $L_2$ are assignments about which the agent is uncertain. She doesn't know wether to accept or reject these worlds. At the level $L_3$ the agent puts the worlds which she considers unlikely.

In the same manner as propositional classical logic captures all the structures at two levels (accepted or rejected), we present here a logic
in which the formulas capture all the three levels' structures. In order to do that, we propose here the use of the Kleene Strong three valued logic \cite{Ber08}
plus two modalities which will be necessary for the completeness of the  representation.

Thus, with the help of this logic, we can model belief change under uncertainty. In order to understand the mechanisms which govern the   changes, we begin with a particular operator we call Cautious Improvement. The way to describe the changes produced in the old epistemic state  by the new piece of information is reminiscent of the changes produced by the improvement operators introduced in \cite{KP08} (see also \cite{KMP10,MP13}).

It is worth to note that in our framework, both the (old) epistemic state and the new piece of information are formulas. Since they are formulas in our new logic, they are both (complex) epistemic states. Therefore, we return, in a natural way, to the paradigm of revising an epistemic state by an epistemic state
first proposed by Benferhat et al. in \cite{BKPP00}.

The  operator of cautious improvement is characterized syntactically. An analysis of the techniques involved in the definition and in the syntactical  characterization allows us to find   a general method for defining all the operators under uncertainty and
extract syntactic postulates that characterize them.

This work is organized as follows. In Section~\ref{the-logic} the Kleene strong 3-valued logic with two new modalities is defined. Therein is proven that every ranking function mapping interpretations into an  ordered scale of three elements can be represented by a formula of this new logic.  Section~\ref{cautious-operator} is devoted to the definition of the {\em cautious improvement operator} and its syntactic characterization.
In Section~\ref{general-characterization} we characterize  each of the $3^9$ (19,683) possible change operators in our logic.
In Section~\ref{related-works} we compare our results with other works in the literature. In Section~\ref{final-remarks}, we conclude with some remarks and give some lines of future development of this work.
Finally, in Appendix~\ref{sec:ProofOfLongTheorem} we give a combinatorial proof of the necessity of our two modalities in order to be able to represent all the ``total preorders'' over interpretations of three levels (Theorem~\ref{the:NotEnoughModality1}).

\section{The Logic}\label{the-logic}

We work within a modal variant of finite $K_3^S$, the Kleene {\em strong} 3-valued logic \cite{Kle38} (see also \cite{Ber08}),
with variables $\seq{x}{n-1}$.
We will call $\Var_n$ the set of these variables.
The usual version of $K_3^S$ has the same syntax as classic propositional logic.
We include the symbol `$\bot$' as the logical constant that always evaluates to $0$.
For the semantics, we have
three possible truth values: $1$ representing truth, $0$ for falsehood and
$1/2$ for {\em non-determined}.
The truth tables for the connectives are the following:

\begin{center}
	\begin{tabular}{c|c|c}
		$P$	&	$Q$	&	$P\land Q$\\
		\hline
		1	&	1	&	1\\
		1	&	1/2	&	1/2\\
		1	&	0	&	0\\
		1/2	&	1	&	1/2\\
		1/2	&	1/2	&	1/2\\
		1/2	&	0	&	0\\
		0	&	1	&	0\\
		0	&	1/2	&	0\\
		0	&	0	&	0
	\end{tabular}
	\qquad
	\begin{tabular}{c|c|c}
		$P$	&	$Q$	&	$P\lor Q$\\
		\hline
		1	&	1	&	1\\
		1	&	1/2	&	1\\
		1	&	0	&	1\\
		1/2	&	1	&	1\\
		1/2	&	1/2	&	1/2\\
		1/2	&	0	&	1/2\\
		0	&	1	&	1\\
		0	&	1/2	&	1/2\\
		0	&	0	&	0
	\end{tabular}
	\qquad
	\begin{tabular}{c|c|c}
		$P$	&	$Q$	&	$P\longrightarrow Q$\\
		\hline
		1	&	1	&	1\\
		1	&	1/2	&	1/2\\
		1	&	0	&	0\\
		1/2	&	1	&	1\\
		1/2	&	1/2	&	1/2\\
		1/2	&	0	&	1/2\\
		0	&	1	&	1\\
		0	&	1/2	&	1\\
		0	&	0	&	1
	\end{tabular}
\\
\mbox{ }
\\
\mbox{ }
\\
\begin{tabular}{c|c}
		$P$	&	$\neg P$\\
		\hline
		1	&	0\\
		1/2	&	1/2\\
		0	&	1	
	\end{tabular}
\end{center}
\bigskip

For the sake of brevity, let's say that a valuation $\omega$ is a {\em quasi-model}
of $\theta$ iff $\omega(\theta)=1/2$ and a {\em countermodel} of $\theta'$ if $\omega(\theta')=0$.
Then we say that $\omega$ is a quasi-model of a set $\Sigma$ iff it is a quasi-model for
all the formulas in $\Sigma$.
The notion of model is the same as in classic propositional logic.
A formula is a {\em contradiction} if it only has countermodels.

Regarding semantics, we have to define some usual symbols:
\begin{enumerate}
	\item For two formulas $\alpha,\beta$ we write $\equiv$ when they have the same truth table.
	\item We write $\alpha\vDash\beta$ when every model of $\alpha$ is a model of $\beta$ i.e.
	we use this symbol with its classic interpretation.
	\item We write $\alpha\myequiv\beta$ to abbreviate $\alpha\vDash\beta$ and $\beta\vDash\alpha$.
\end{enumerate}

Note that in this logic there are no tautologies. Actually, it is easy to check that for all interpretation $\omega$  taking the constant value $\frac 12$, we have that for every formula $\varphi$, $\omega(\varphi)=\frac 12$.

We extend $K_3^S$ by adding the modal operators $\lozenge_1$ and $\square_1$.
Our syntax is also extended by the new formation rule stating that if $\varphi$ is
a formula then $\lozenge_1\varphi$ and $\square_1\varphi$ are also formulas.
The modal operators are interpreted as follows:

\bigskip
\begin{center}
	
\begin{tabular}{c|c}
	$\varphi$	 &	$\lozenge_1 \varphi$\\
	\hline
	1	&	1/2\\
	1/2	&	0\\
	0	&	0i
\end{tabular}
\qquad
\qquad
\begin{tabular}{c|c}
	$\varphi$	 &	$\square_1 \varphi$\\
	\hline
	1	&	1\\
	1/2	&	1\\
	0	&	1/2
\end{tabular}

\end{center}
\bigskip

Intuitively, $\lozenge_1$ worsens the truth value of $\varphi$ and $\square_1$ improves it. It is easily noted that
$\square_1\varphi=\neg\lozenge_1\neg\varphi$, so these are dual operators.

Note that in  $K_3^S$ extended by $\square_1$ there are many tautologies. As a matter of fact, for every formula $\varphi$, the 
formula $ \square_1 \square_1 \varphi$ is a tautology.

With the use of $\lozenge_1$ and $\square_1$ we can find,
given a truth assignment $\omega$,
a formula $\varphi_\omega$ such that its only model is $\omega$.
Indeed, given a truth assignment $\omega=\tup{t_0,t_1,\dots, t_{n-1}}$ on $\Var$,
we have the formula
\begin{equation*}
\varphi_\omega:=	\alpha_0\land \alpha_1\land \dots \land \alpha_{n-1}
\end{equation*}
where each $\alpha_i$ is a formula given by
\begin{equation*}
	\alpha_i=
	\begin{cases}
	x_i	&	\text{if}\quad t_i=1\\
	\neg x_i	&	\text{if}\quad t_i=0\\
	\square_1 x_i\land\square_1 \neg x_i	&	\text{if}\quad t_i=1/2
	\end{cases}
\end{equation*}
This formula evaluates to $1$ if, and only if, all of the $\alpha_i$'s evaluate to $1$,
thus the only model for $\varphi_\omega$ is $\omega$. Notice that it can have more than one quasi-model.
On the other hand, if $\set{\seq{\omega}{k-1}}$ is a set of truth assignments,
the formula
\begin{equation}\label{eq:CaptureOfSetOfModels}
	\varphi_{\seq{\omega}{n-1}}:=\bigvee_{0\leq i\leq n-1}\varphi_{\omega_i}
\end{equation}
has the interpretations in $\set{\seq{\omega}{k-1}}$ as its only models.

We can also give ``normal'' forms that allow us to ``push'' the modal operator $\lozenge_1$
within parentheses. One can easily check that
\begin{equation*}
	\lozenge_1(\theta_1\land\theta_2)\equiv \lozenge_1\theta_1\land\lozenge_1\theta_2
\end{equation*}
and that
\begin{equation*}
	\lozenge_1(\theta_1\lor\theta_2)\equiv \lozenge_1\theta_1\lor\lozenge_1\theta_2
\end{equation*}

On the other hand, for reasons that will become apparent later, we are going to introduce the modal operator
$\lozenge_2$ and its dual $\square_2$ whose semantics are given by the truth tables

\bigskip
\begin{center}
	
	\begin{tabular}{c|c}
		$\varphi$	 &	$\lozenge_2 \varphi$\\
		\hline
		1	&	1\\
		1/2	&	0\\
		0	&	0
	\end{tabular}
	\qquad
	\qquad
	\begin{tabular}{c|c}
		$\varphi$	 &	$\square_2 \varphi$\\
		\hline
		1	&	1\\
		1/2	&	1\\
		0	&	0
	\end{tabular}
	
\end{center}

\bigskip

It is also easy to check that

\begin{equation*}
\lozenge_2(\theta_1\land\theta_2)\equiv \lozenge_2\theta_1\land\lozenge_2\theta_2
\end{equation*}
and
\begin{equation*}
\lozenge_2(\theta_1\lor\theta_2)\equiv \lozenge_2\theta_1\lor\lozenge_2\theta_2
\end{equation*}

We denote by $K_3^S+\lozenge_1+\lozenge_2$ the modal extension of $K_3^S$ by $\lozenge_1$ and $\lozenge_2$
and by $K_3^S+\lozenge_i$ the extension by $\lozenge_i$ for $i=1,2$.
The set of all the formulas in this logic is denoted by $\mathcal F$.

\subsection{Formulas and preorders}

In the  finite case, $K_3^S+\lozenge_1+\lozenge_2$ over the set of
variables
\begin{equation*}
	\Var_n=\set{\seq{x}{n-1}}
\end{equation*}
we have the set
\begin{equation*}
	\mathcal I_n=\set{\seq{\omega}{3^n-1}}
\end{equation*}
containing all the interpretations on $\Var_n$.

Each formula $\varphi$ in $K_3^S+\lozenge_1+\lozenge_2$ induces a partition of
$\mathcal I_n$ into three blocks $L_1(\varphi),L_2(\varphi)$ and $L_3(\varphi)$ defined as
\begin{align*}
	L_1(\varphi)		&	= \setdef{\omega\in\mathcal I_n}{\omega(\varphi)=1}\\
	L_2(\varphi)		&	= \setdef{\omega\in\mathcal I_n}{\omega(\varphi)=1/2}\\
	L_3(\varphi)		&	= \setdef{\omega\in\mathcal I_n}{\omega(\varphi)=0}
\end{align*}
It is easy to check that the relation $\preceq_\varphi$,
defined as
\begin{equation*}
	\omega\preceq_\varphi\omega'	\iff	\omega\in L_i(\varphi), \omega'\in L_j(\varphi)\quad \text{with}\quad i\geq j
\end{equation*}
is a `total preorder'. Thus, the first level contains the worlds accepted by $\varphi$; the second level contains
the uncertain worlds of $\varphi$ and the third level contains the worlds rejected by $\varphi$.

More precisely, we can see $\preceq_\varphi$ as a ranking function $r_\varphi : {\mathcal I_n}\dans \Set{0,\frac 12, 1}$,
where $r_\varphi(\omega)=1$ when $\omega\in L_1(\varphi)$, $r_\varphi(\omega)=\frac 12$ when $\omega\in L_2(\varphi)$ and
$r_\varphi(\omega)=0$ when $\omega\in L_3(\varphi)$.
We regard $L_1(\varphi),L_2(\varphi)$
and $L_3(\varphi)$ as the levels of $\preceq_\varphi$.
Notice that every formula defines one of these three-level preorders (ranking function).
Such a preorder must have at least one non empty level.

Given a ranking function  $r : {\mathcal I_n}\dans \Set{0,\frac 12, 1}$ we also use the notation
 $L_j(P)$ for levels on the preorder $P$ defined by it, that is
 $\omega P \omega'$  iff $r(\omega)\geq r(\omega')$ and $L_1(P)= r^{-1}(1)$, $L_2(P)= r^{-1}(\frac 12)$ and $L_3(P)= r^{-1}(0)$.
Note that these `preorders' are indeed more general than total preorders. For instance, with ranking functions we can distinguish
 different constant functions mapping all the interpretations to different levels, whereas considered as simply total preorders,  all these different ranking functions are identified with the same flat total  preorder. By abuse of notation, in this work we will continue to call three-level preorder $P$, any partition in three levels given by a ranking function $r : {\mathcal I_n}\dans \Set{0,\frac 12, 1}$.

We will prove that for every three-level preorder $P$ on $\mathcal I_n$
there is a formula in $K_3^S+\lozenge_1+\lozenge_2$
that characterizes it.

\begin{theorem}\label{th:SyntacticRepresentation}
	Given a three-level preorder $P$ in $\mathcal I_n$, there is a formula $\phi_P$ in $K_3^S+\lozenge_1+\lozenge_2$ such that
	for every interpretation $\omega$ in $I_n$:
	\begin{enumerate}\denselist
		\item $\omega(\phi_P)=1	\iff\omega\in L_1(P)$,
		\item $\omega(\phi_P)=1/2\iff\omega\in L_2(P)$, and
		\item $\omega(\phi_P)=0\iff \omega\in L_3(P)$.
	\end{enumerate}
\end{theorem}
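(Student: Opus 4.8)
The plan is to manufacture, for each of the three target levels, a simple ``indicator'' formula and then glue these together with a disjunction. The crucial building block is already at hand: for a single interpretation $\omega$ the formula $\varphi_\omega$ of the preceding construction takes value $1$ precisely at $\omega$, but, as noted there, it may take value $1/2$ at other interpretations. To turn it into a genuine crisp indicator I would apply $\lozenge_2$, whose truth table sends both $1/2$ and $0$ to $0$ while fixing $1$. Thus $\lozenge_2\varphi_\omega$ evaluates to $1$ at $\omega$ and to $0$ everywhere else. Using the distribution law $\lozenge_2(\theta_1\lor\theta_2)\equiv\lozenge_2\theta_1\lor\lozenge_2\theta_2$ together with the disjunction $\varphi_S:=\bigvee_{\omega\in S}\varphi_\omega$ of \eqref{eq:CaptureOfSetOfModels}, for any set $S\subseteq\mathcal I_n$ the formula $\delta_S:=\lozenge_2\varphi_S$ (read as $\bot$ when $S=\emptyset$) then satisfies $\omega(\delta_S)=1$ for $\omega\in S$ and $\omega(\delta_S)=0$ for $\omega\notin S$.

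Next I would produce a formula that is $1/2$ exactly on a prescribed set and $0$ elsewhere. Applying $\lozenge_1$, which sends $1\mapsto 1/2$ and $0\mapsto 0$, to a crisp indicator does exactly this: $\lozenge_1\delta_S$ takes value $1/2$ on $S$ and $0$ off $S$. With $A:=L_1(P)$ and $B:=L_2(P)$ I then set
\begin{equation*}
\phi_P:=\delta_A\lor\lozenge_1\delta_B .
\end{equation*}
The verification is a case check against the $\lor$ truth table, using that the three levels are pairwise disjoint. On $A$ the first disjunct is $1$ and the second is $0$, giving $1$; on $B$ the first disjunct is $0$ and the second is $1/2$, giving $1/2$; on $C:=L_3(P)$ both disjuncts are $0$, giving $0$. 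This is exactly the required behaviour.

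The only bookkeeping concerns empty levels, which are allowed provided at least one level is non-empty; reading an empty disjunction as $\bot$ handles this, since $\bot$ evaluates to $0$ everywhere and $\lozenge_1\bot$ likewise. The one genuinely delicate point---and the reason the proof is not merely a routine application of $\varphi_\omega$---is that $\varphi_\omega$ is \emph{not} a crisp indicator: its stray quasi-models would pollute the disjunction and corrupt the value on $L_3(P)$. The operator $\lozenge_2$ is precisely what collapses those unwanted $1/2$'s to $0$, and I expect this to be the conceptual crux of the argument; it foreshadows the claim (Theorem~\ref{the:NotEnoughModality1}) that $\lozenge_1$ alone does not suffice to represent all three-level preorders.
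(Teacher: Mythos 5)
Your proof is correct and follows essentially the same route as the paper: both start from the formulas of Equation~\ref{eq:CaptureOfSetOfModels} whose models form a prescribed set, use $\lozenge_2$ to build crisp (two-valued) level indicators and $\lozenge_1$ to pin a level at value $1/2$, and finish with a truth-table case check over the three disjoint levels. The only differences are cosmetic: the paper encodes $L_2(P)$ and $L_3(P)$ and then negates, taking $\phi_P=\neg(\lozenge_1\psi_2\lor\lozenge_2\psi_3)$, whereas you encode $L_1(P)$ and $L_2(P)$ positively as $\delta_A\lor\lozenge_1\delta_B$; note also that the inner $\lozenge_2$ in $\lozenge_1\delta_B$ is redundant, since $\lozenge_1$ already sends the stray value $1/2$ to $0$, which is exactly how the paper can use $\lozenge_1\psi_2$ directly.
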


\begin{proof}
	We know, due to Equation \ref{eq:CaptureOfSetOfModels}, that there are formulas $\psi_1,\psi_2$ and $\psi_3$ such that $\omega(P)=1$ if and only if $\omega\in L_j(P)$. Define the formula
	\begin{equation*}
	\phi_P:=\neg(\lozenge_1\psi_2\lor\lozenge_2\psi_3)
	\end{equation*}
	Then,
	\begin{align*}
	\omega\in L_1(P)	
	&\iff \omega\not\in L_2(P)\quad\text{and}\quad \omega\not\in L_3(P)\\
	&\iff  \omega(\psi_2),\omega(\psi_3)\in\set{0,1/2}\\
	&\iff	\omega(\lozenge_1\psi_2)=\omega(\lozenge_2\psi_3)=0\\
	&\iff	\omega(\lozenge_1\psi_2\lor\lozenge_2\psi_3)=0\\
	&\iff	\omega(\neg(\lozenge_1\psi_2\lor\lozenge_2\psi_3))=1
	\end{align*}
	i.e.,
	\begin{equation*}
	\omega\in L_1(P)\iff \omega(\phi_P)=1
	\end{equation*}

	For the characterization of the second level, we have
	\begin{align*}
	\omega\in L_2(P)	&\implies	\omega(\psi_2)=1\quad\text{and}\quad\omega(\psi_3)\in\set{0,1/2}\\
	&\implies	\omega(\lozenge_1\psi_2)=1/2\quad\text{and}\quad\omega(\lozenge_2\psi_3)=0\\
	&\implies	\omega(\lozenge_1\psi_2\lor\lozenge_2\psi_3)=1/2\\
	&\implies	\omega(\neg(\lozenge_1\psi_2\lor\lozenge_2\psi_3))=1/2
	\end{align*}
	On the other hand,
	\begin{align*}
	\omega(\neg(\lozenge_1\psi_2\lor\lozenge_2\psi_3))=1/2
	&\implies	\omega(\lozenge_1\psi_2\lor\lozenge_2\psi_3)=1/2	\\
	&\implies	\omega(\lozenge_1\psi_2)=1/2\quad\text{or}\quad\omega(\lozenge_2\psi_3)=1/2
	\end{align*}
	It is not possible to have $\omega(\lozenge_2\psi_3)=1/2$ because the truth value of $\lozenge_2\psi_3$
	is either 1 or 0 under any interpretation,
	thus it must be $\omega(\lozenge_1\psi_2)=1/2$, so $\omega(\psi_2)=1$ and as a consequence $\omega\in L_2(P)$.
	
	Hence
	\begin{equation*}
	\omega\in L_2(P)	\iff	\omega(\phi_P)=1/2
	\end{equation*}
	for any interpretation $\omega$.
	
	Finally, for the third level:
	\begin{align*}
	\omega\in L_3(P)			&\implies	\omega(\psi_2)\in\set{0,1/2}\quad\text{and}\quad\omega(\psi_3)=1\\
	&\implies	\omega(\lozenge_1\psi_2)=0\quad\text{and}\quad\omega(\lozenge_2\psi_3)=1\\
	&\implies	\omega(\lozenge_1\psi_2\lor\lozenge_2\psi_3)=1\\
	&\implies	\omega(\neg(\lozenge_1\psi_2\lor\lozenge_2\psi_3))=0
	\end{align*}
	For the converse, we have
	\begin{equation*}
	\omega(\neg(\lozenge_1\psi_2\lor\lozenge_2\psi_3))=0
	\implies
	\omega(\lozenge_1\psi_2\lor\lozenge_2\psi_3)=1
	\end{equation*}
	Thus, either $\omega(\lozenge_1\psi_2)=1$ or $\omega(\lozenge_2\psi_3)=1$.
	Since $\omega(\lozenge_1\theta)\leq 1/2$ for every formula $\theta$,
	it must be the case that $\omega(\lozenge_2\psi_3)=1$,
	so $\omega(\psi_3)=1$ and $\omega\in L_3(P)$.
	Consequently
	\begin{equation*}
	\omega\in L_3(P)	\iff	\omega(\phi_P)=0
	\end{equation*}
	
	Henceforth $\phi_P$ captures the preorder $P$.
\end{proof}

Theorem \ref{th:SyntacticRepresentation} shows that
every preorder on $I_n$ is characterized by a formula in $K_3^S+\lozenge_1+\lozenge_2$.
In Theorem \ref{the:NotEnoughModality1} we prove that both $\lozenge_1$
and $\lozenge_2$ are needed for this.

\begin{theorem}\label{the:NotEnoughModality1}
	$K_3^S+\lozenge_i$ with $i=1,2$ is not enough to define every 3-level preorder on $\mathcal I_n$.
\end{theorem}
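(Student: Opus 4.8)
The plan is to treat the two cases $i=2$ and $i=1$ separately, proving in each a semantic invariant that every formula of $K_3^S+\lozenge_i$ must satisfy, and then exhibiting a ranking function $r:\mathcal I_n\to\{0,\frac12,1\}$ that violates it. By Theorem~\ref{th:SyntacticRepresentation} every such $r$ is representable in the full logic $K_3^S+\lozenge_1+\lozenge_2$, so its non-representability with a single modality gives the claim. Throughout I would work with the knowledge order $\leq_k$ on truth values, in which $\frac12$ is the least element and $0,1$ are incomparable, extended coordinatewise to interpretations ($\omega\leq_k\omega'$ iff for every variable $\omega(x_j)=\frac12$ or $\omega(x_j)=\omega'(x_j)$). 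The elementary remark driving everything is that a pair of truth values is $\leq_k$-comparable exactly when it is not $\{0,1\}$.

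For $i=2$ the invariant is that the two-element set $\{0,1\}$ is closed under every operation: the strong Kleene connectives restrict to the classical Boolean ones on $\{0,1\}$, the constant $\bot$ is $0$, and $\lozenge_2$ fixes both $0$ and $1$. A one-line structural induction then shows that every classical (two-valued) interpretation $\omega$ satisfies $\omega(\varphi)\in\{0,1\}$ for all $\varphi\in K_3^S+\lozenge_2$, so no classical interpretation can ever lie in $L_2(\varphi)$. I would then take any ranking function placing a classical world at the middle level, e.g.\ the constant $r\equiv\frac12$, or simply $r$ with $r(\omega_0)=\frac12$ for the all-false $\omega_0$, and conclude it is not representable.

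For $i=1$ this subalgebra trick is exactly what breaks, and that is the main obstacle: since $\lozenge_1(1)=\frac12$ and $\lozenge_1(\frac12)=0$ while $\neg$ swaps $0,1$ and fixes $\frac12$, no nonempty proper subset of $\{0,\frac12,1\}$ is closed under the operations, and there is no nontrivial value-congruence either. The remedy is to track a binary invariant. I would prove by structural induction the statement $(\star)$: for all $\omega\leq_k\omega'$ the pair $\{\omega(\varphi),\omega'(\varphi)\}$ is $\leq_k$-comparable, i.e.\ is not $\{0,1\}$. Atoms and constants are immediate; $\neg$ is a $\leq_k$-automorphism, so it sends comparable pairs to comparable pairs; for $\land=\min$ and $\lor=\max$ one checks that if $\{a_1,b_1\}$ and $\{a_2,b_2\}$ are comparable then so is $\{\min(a_1,a_2),\min(b_1,b_2)\}$ (were it $\{0,1\}$, one of the two input pairs would already equal $\{0,1\}$), and likewise for $\max$; the crucial case is $\lozenge_1$, where a direct check shows it maps comparable pairs to comparable pairs, the only nontrivial cases being $\{0,\frac12\}\mapsto\{0\}$ and $\{1,\frac12\}\mapsto\{0,\frac12\}$, even though $\lozenge_1$ is neither $\leq_k$-monotone nor $\leq_k$-antitone.

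Granting $(\star)$, I would finish by choosing two distinct $\leq_k$-comparable interpretations, e.g.\ the all-$\frac12$ interpretation $\omega_{1/2}$ and the all-false $\omega_0$ with $\omega_{1/2}\leq_k\omega_0$, and defining $r$ with $r(\omega_{1/2})=1$ and $r(\omega_0)=0$ (arbitrary elsewhere); any representing $\varphi$ would force $\{\omega_{1/2}(\varphi),\omega_0(\varphi)\}=\{0,1\}$, contradicting $(\star)$. It is worth recording the duality that closes the circle: $\lozenge_2$ violates $(\star)$ (it sends $\{1,\frac12\}$ to $\{1,0\}$), while $\lozenge_1$ violates the two-valuedness invariant (since $\lozenge_1(1)=\frac12$), so each modality is precisely what is needed to break the obstruction governing the other, which explains why both are required in Theorem~\ref{th:SyntacticRepresentation}.
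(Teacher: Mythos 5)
Your proof is correct, but it takes a genuinely different route from the paper's. The paper's own argument (Appendix~\ref{sec:ProofOfLongTheorem}) fixes $n=1$, reinterprets $\neg$, $\square_i$, $\lor$ as operations on three-level preorders over the three one-variable interpretations, and runs a reachability induction: starting from the preorder of the formula $x$, an extensive case analysis shows that the closure under these operations never meets certain forbidden families $\mathcal F_1$, $\mathcal F_2$ of preorders. You instead give a preservation argument in the style of universal algebra, exhibiting for each single modality an invariant satisfied by all formulas: for $\lozenge_2$, the set $\{0,1\}$ is closed under every operation of $K_3^S+\lozenge_2$ (including $\square_2=\neg\lozenge_2\neg$), so a classical world can never receive value $\frac12$; for $\lozenge_1$, the relation ``the pair of values is not $\{0,1\}$'' is preserved by $\neg$, $\land$, $\lor$ (your min/max check is right), by $\to$ (it has the truth table of $\neg P\lor Q$, so it is covered compositionally), and by $\lozenge_1$ and $\square_1$ outright, since their ranges omit $1$ and $0$ respectively. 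Each invariant is then violated by an explicit ranking function, exactly as required. What your approach buys: it is far shorter, it works uniformly for every $n$ (the paper's combinatorics are carried out for one variable), and it isolates the conceptual reason each modality fails --- nicely complemented by your closing duality remark, which explains why the pair $\lozenge_1,\lozenge_2$ suffices in Theorem~\ref{th:SyntacticRepresentation}. What the paper's approach buys is finer combinatorial information: it identifies concrete families of preorders that are unreachable in each fragment. One inessential slip in your write-up: the parenthetical claim that $\lozenge_1$ is neither $\leq_k$-monotone nor $\leq_k$-antitone is false; since $\frac12$ is the $\leq_k$-bottom and $\lozenge_1(1)=\frac12$, $\lozenge_1(\frac12)=\lozenge_1(0)=0$, the map $\lozenge_1$ is in fact $\leq_k$-antitone (though not monotone). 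This does not affect the proof, because your induction step uses only the direct check that comparable pairs map to comparable pairs, which is correct.
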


\begin{proof}
    See Appendix \ref{sec:ProofOfLongTheorem}.
\end{proof}

\begin{corollary}\label{th:NonCapture1}
	There are three-level preorders on all the interpretations on $\Var_n$ that can not be defined by any formula in $K_3^S$ .
\end{corollary}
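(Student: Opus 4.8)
The plan is to exploit the observation already recorded for the modality-free logic $K_3^S$: the interpretation $\omega^\ast$ assigning the value $1/2$ to every variable satisfies $\omega^\ast(\varphi)=1/2$ for every formula $\varphi$ of $K_3^S$. Equivalently, $\omega^\ast\in L_2(\varphi)$ for every such $\varphi$; the constant-$1/2$ world is irrevocably pinned to the middle level. This single fact is the engine of the whole argument, since it says that no formula of $K_3^S$ can move $\omega^\ast$ out of $L_2$.

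First I would state the key lemma precisely and, if desired, justify it by structural induction on $\varphi$: the base case is $\omega^\ast(x_i)=1/2$, and each strong Kleene table for the connectives $\neg,\land,\lor,\to$ returns $1/2$ whenever all its arguments are $1/2$ (for instance $\neg\tfrac12=\tfrac12$ and $\tfrac12\land\tfrac12=\tfrac12$), so the value $1/2$ propagates through every connective. This is exactly the remark made earlier in Section~\ref{the-logic}, so I would simply cite it rather than reprove it.

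Next I would exhibit a three-level preorder that $K_3^S$ provably cannot define. Take the constant ranking function $r\equiv 1$, i.e. the preorder $P$ with $L_1(P)=\mathcal I_n$ and $L_2(P)=L_3(P)=\emptyset$; this is a legitimate three-level preorder in the sense of the paper, since the framework explicitly admits constant ranking functions, only one nonempty level being required. For this $P$ we have $\omega^\ast\in L_1(P)$, whereas the key lemma forces $\omega^\ast\in L_2(\varphi)$ for every $K_3^S$ formula $\varphi$. Hence $L_1(\varphi)\neq L_1(P)$ for all $\varphi$, so no formula of $K_3^S$ defines $P$. The constant preorder $r\equiv 0$, putting $\omega^\ast\in L_3$, serves equally well as a witness.

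I do not expect a genuine obstacle here; the argument is a one-world contradiction, and the only points needing care are bookkeeping rather than mathematics: confirming that constant ranking functions count as three-level preorders (they do, by the discussion preceding Theorem~\ref{th:SyntacticRepresentation}) and being explicit that ``defining'' $P$ means matching all three levels $L_1,L_2,L_3$. For contrast, I would note that the statement can also be read off from Theorem~\ref{the:NotEnoughModality1}: since every $K_3^S$ formula is in particular a formula of $K_3^S+\lozenge_i$, the family of preorders definable in $K_3^S$ is contained in the family definable in $K_3^S+\lozenge_i$, which by that theorem already omits some three-level preorder; a fortiori $K_3^S$ omits it too. I prefer the direct constant-$1/2$ argument, however, because it produces a transparent explicit witness instead of an inclusion of definability classes.
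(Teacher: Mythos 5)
Your argument is correct under the paper's stated conventions, and it takes a genuinely different route from the paper's. The paper gives no standalone proof of this corollary: it reads it off from Theorem~\ref{the:NotEnoughModality1} exactly as in your closing remark --- every $K_3^S$ formula is in particular a $K_3^S+\lozenge_i$ formula, and by that theorem even the extended logic misses some three-level preorder --- so the paper's route is the inclusion-of-definability-classes argument that you mention and then decline to use, at the cost of resting on the long combinatorial induction of Appendix~\ref{sec:ProofOfLongTheorem}. Your main argument is instead direct and self-contained: the constant-$1/2$ valuation $\omega^\ast$ satisfies $\omega^\ast(\varphi)=1/2$ for every modality-free formula, so any ranking function placing $\omega^\ast$ outside $L_2$ is an explicit undefinable witness. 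That buys a two-line proof with a concrete witness and no appendix; its limitation is that it cannot be pushed up to Theorem~\ref{the:NotEnoughModality1} itself, since either modality destroys the pinning of $\omega^\ast$ (e.g.\ $\omega^\ast(\square_1\varphi)$ can equal $1$), which is exactly why the paper needs the heavier machinery there.

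One point needs repair before your text can stand as a proof. The paper's $K_3^S$ officially includes the constant $\bot$ (it is even used in Section~\ref{general-characterization}), and once $\bot$ is admitted the remark you cite is false: $\omega^\ast(\bot)=0$ and $\omega^\ast(\neg\bot)=1$. Worse, your two witnesses are precisely the preorders that $\bot$ recaptures: the constant ranking $r\equiv 0$ is defined by $\bot$, and $r\equiv 1$ by $\neg\bot$. This inconsistency originates in the paper (its remark, and the appendix induction, tacitly ignore $\bot$), but your choice of witness is the one it bites. The cheap fix is to take $P$ with $L_1(P)=\{\omega^\ast\}$ and every other interpretation in $L_3(P)$: all strong Kleene connectives, and the constant $\bot$, are monotone with respect to the information order in which $1/2$ lies below both $0$ and $1$, and $\omega^\ast$ is pointwise information-least, so $\omega^\ast(\varphi)=1$ forces $\omega(\varphi)=1$ for every $\omega$; hence no $K_3^S$ formula, with or without $\bot$, puts $\omega^\ast$ alone in $L_1$. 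Alternatively, state explicitly that you work in $\bot$-free $K_3^S$, which is the reading under which the paper's remark --- and therefore your proof as written --- is sound.
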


\section{The Cautious Improvement Operator}\label{cautious-operator}
We want to define a belief revision operator of the kind 
\[
\ast:\mathcal F\times \mathcal F\longrightarrow \mathcal F
\]
Thus, in this framework  we are going to represent both {\em epistemic states} and {\em epistemic inputs} by formulas in $K_3^S+\lozenge_1+\lozenge_2$.
Also the outputs will be represented by the same type of formulas.

\subsection{Definition of the operator}\label{sec:DefinitionOfCI}

For a fixed $n\in\mathbb N$
we first define $\ast$ as a binary operator on 3-level preorders over $\mathcal I_n$.
The following table gives the level of an interpretation $\omega$ in
$\preceq_\varphi\ast\preceq_\theta$
given the levels where it is located in $\preceq_\varphi$ and $\preceq_\theta$:

\bigskip
\begin{equation}\label{eq:OperatorTable}
\begin{array}{c|c|c|c}
\omega				&	L_1(\preceq_\theta)				&	L_2(\preceq_\theta)					&	L_3(\preceq_\theta)\\
\hline					
L_1(\preceq_\varphi)	&	L_1(\preceq_\varphi\ast\preceq_\theta)	&	L_2(\preceq_\varphi\ast\preceq_\theta)	&	L_2(\preceq_\varphi\ast\preceq_\theta)	\\
\hline
L_2(\preceq_\varphi)	&	L_1(\preceq_\varphi\ast\preceq_\theta)	&	L_2(\preceq_\varphi\ast\preceq_\theta)	&	L_3(\preceq_\varphi\ast\preceq_\theta)	\\
\hline
L_3(\preceq_\varphi)	&	L_2(\preceq_\varphi\ast\preceq_\theta)	&	L_2(\preceq_\varphi\ast\preceq_\theta)	&	L_3(\preceq_\varphi\ast\preceq_\theta)						
\end{array}
\end{equation}
\bigskip

We have defined $\ast$ as a binary operator on preorders.
Thanks to Theorem \ref{th:SyntacticRepresentation}, we can define $\ast$ as a binary operator
on formulas, but still referring to preorders:
given formulas $\varphi$ and $\theta$, the formula $\varphi\ast\theta$
is the one corresponding to the preorder $\preceq_\varphi\ast\preceq_\theta$.
Also notice that $\preceq_\varphi\ast\preceq_\theta\,=\,\preceq_{\varphi\ast\theta}$.

We can make two observations:

First, Table \ref{eq:OperatorTable} yields a truth table for $\varphi\ast\theta$
\begin{equation*}
	\begin{array}{c|c|c}
	\varphi		&	\theta		&		\varphi\ast\theta\\
	\hline
	\hline
	1			&		1		&		1\\
	\hline
	1			&		1/2		&		1/2\\
	\hline	
	1			&		0		&		1/2\\
	\hline
	1/2			&		1		&		1\\
	\hline
	1/2			&		1/2		&		1/2\\
	\hline
	1/2			&		0		&		0\\
	\hline
	0			&		1		&		1/2\\
	\hline
	0			&		1/2		&		1/2\\
	\hline
	0			&		0		&		0
	\end{array}
\end{equation*}

Second, via the identification of formulas with preorders
we can write $L_j(\phi)$ instead of $L_j(\preceq_\phi)$ for any given formula $\phi$.

Figure \ref{fig:Behavior} illustrates the behavior of this operator.

\begin{figure}
\begin{center}	
	\begin{tikzpicture}
	
	\draw[very thick] (-8,-1) -- (-6,-1);
	\draw[very thick] (-8,0) -- (-6,0);
	\draw[very thick] (-8,1) -- (-6,1);
	
	\node at (-7,-1.75) {$\varphi$};
	
	
	\node at (-7.5,-.85) {$\omega_1$};
	\node at (-7,-.85) {$\omega_2$};
	\node at (-6.5,-.85) {$\omega_3$};
	
	\node at (-7.5,.15) {$\omega_4$};
	\node at (-7,.15) {$\omega_5$};
	\node at (-6.5,.15) {$\omega_6$};
	
	\node at (-7.5,1.15) {$\omega_7$};
	\node at (-7,1.15) {$\omega_8$};
	\node at (-6.5,1.15) {$\omega_9$};
	
	
	\node at (-5,0) {$\ast$};
	
	
	\draw[very thick] (-4,-1) -- (-2,-1);
	\draw[very thick] (-4,0) -- (-2,0);
	\draw[very thick] (-4,1) -- (-2,1);
	
	\node at (-3,-1.75)	{$\theta$};
	
	
	\node at (-3.5,1.15) {$\omega_1$};
	\node at (-3.5,.15) {$\omega_2$};
	\node at (-3.5,-.85) {$\omega_3$};
	
	\node at (-3,1.15) {$\omega_4$};
	\node at (-3,.15) {$\omega_5$};
	\node at (-3,-.85) {$\omega_6$};
	
	\node at (-2.5,1.15) {$\omega_7$};
	\node at (-2.5,.15) {$\omega_8$};
	\node at (-2.5,-.85) {$\omega_9$};
	
	
	\node at (-1,0) {$=$};
	
	
	\draw[very thick] (0,-1) -- (2,-1);
	\draw[very thick] (0,0) -- (2,0);
	\draw[very thick] (0,1) -- (2,1);
	
	\node at (1,-1.75)	{$\varphi\ast\theta$};
	
	
	\node at (1.5,1.15) {$\omega_7$};
	\node at (.5,1.15) {$\omega_4$};
	
	\node at (0,.15) {$\omega_1$};
	\node at (0.5,.15) {$\omega_2$};
	\node at (1,.15) {$\omega_5	$};
	\node at (1.5,.15) {$\omega_8$};
	\node at (2,.15) {$\omega_9$};

	\node at (.5,-.85) {$\omega_3$};
	\node at (1.5,-.85) {$\omega_6$};
	\end{tikzpicture}
\end{center}	
	\caption{Example of operator's behavior}\label{fig:Behavior}
\end{figure}

\subsection{Postulates}
	\begin{lemma}\label{le:CICharacterization1}
	The Cautious Improvement Operator satisfies the following postulates	
	\[
	\begin{array}{lr}
	(\varphi\land\theta) \vee ((\square_1\varphi\land\square_1\neg\varphi)\land\theta)		\myequiv \varphi\ast\theta & \mbox{(CI1)}\\
	((\square_1\varphi\land\square_1\neg\varphi)\land\neg\theta)\vee (\neg\varphi\land\neg\theta)		\myequiv \neg(\varphi\ast\theta) &\mbox{(CI2)}
	\end{array}
	\]

	\end{lemma}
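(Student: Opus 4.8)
The plan is to verify both postulates directly from the truth table of $\ast$, keeping in mind that $\myequiv$ compares \emph{only} the models (the value-$1$ valuations) of the two sides, since $\alpha\myequiv\beta$ abbreviates $\alpha\vDash\beta$ and $\beta\vDash\alpha$, and $\vDash$ is the classical consequence restricted to value-$1$ assignments. In particular, $\omega$ is a model of $\neg(\varphi\ast\theta)$ exactly when $\omega(\varphi\ast\theta)=0$, because $\neg$ sends $0$ to $1$ and fixes $1/2$. Hence (CI1) asks that its left-hand side have precisely the models of $\varphi\ast\theta$, while (CI2) asks that its left-hand side have precisely the \emph{countermodels} of $\varphi\ast\theta$. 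Reading off the truth table of $\ast$, the models are the pairs $(\omega(\varphi),\omega(\theta))\in\{(1,1),(1/2,1)\}$ and the countermodels are the pairs $\{(1/2,0),(0,0)\}$.

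The key preliminary step is to compute the behaviour of the ``uncertainty indicator'' $U:=\square_1\varphi\land\square_1\neg\varphi$. Using the truth table of $\square_1$ one checks that $\omega(U)=1$ iff $\omega(\varphi)=1/2$, and $\omega(U)=1/2$ otherwise; in particular $U$ is never $0$. I will also use that, under the Kleene strong semantics, $\land$ is the minimum and $\lor$ the maximum for the order $0<1/2<1$, and $\neg$ is $v\mapsto 1-v$; consequently a conjunction evaluates to $1$ iff both conjuncts do, and a disjunction evaluates to $1$ iff some disjunct does.

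For (CI1) I then evaluate when the left-hand side $(\varphi\land\theta)\vee(U\land\theta)$ takes the value $1$. By the disjunction rule this happens iff $\omega(\varphi\land\theta)=1$ or $\omega(U\land\theta)=1$; the first conjunction is $1$ iff $\omega(\varphi)=\omega(\theta)=1$, and the second is $1$ iff $\omega(U)=1$ and $\omega(\theta)=1$, i.e.\ iff $\omega(\varphi)=1/2$ and $\omega(\theta)=1$. Thus the models of the left-hand side are exactly the pairs $(1,1)$ and $(1/2,1)$, coinciding with the models of $\varphi\ast\theta$; this gives (CI1). The computation for (CI2) is symmetric: the left-hand side $(U\land\neg\theta)\vee(\neg\varphi\land\neg\theta)$ is $1$ iff $\omega(U)=1$ and $\omega(\neg\theta)=1$ (i.e.\ $\omega(\varphi)=1/2$, $\omega(\theta)=0$), or $\omega(\neg\varphi)=1$ and $\omega(\neg\theta)=1$ (i.e.\ $\omega(\varphi)=0$, $\omega(\theta)=0$). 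These are exactly the pairs $(1/2,0)$ and $(0,0)$, i.e.\ the countermodels of $\varphi\ast\theta$, which are the models of $\neg(\varphi\ast\theta)$; this gives (CI2).

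The only real subtlety — and the point one must be careful about — is that the two sides are \emph{not} identical as truth tables, only as sets of models: for instance at $(\omega(\varphi),\omega(\theta))=(1,0)$ the left-hand side of (CI1) evaluates to $0$ while $\omega(\varphi\ast\theta)=1/2$ (and symmetrically at $(0,1)$ for (CI2)). So the proof must be phrased in terms of $\myequiv$ rather than $\equiv$; attempting to prove a full truth-table identity would fail. Beyond this, the argument is a finite case check over the nine value-pairs, with the computation of $U$ as the only non-routine ingredient.
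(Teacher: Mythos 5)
Your proof is correct and follows essentially the same route as the paper's: both arguments verify that each side of (CI1) and (CI2) has exactly the same set of models, with the same key computation that $\square_1\varphi\land\square_1\neg\varphi$ evaluates to $1$ precisely when $\omega(\varphi)=1/2$, i.e.\ when $\omega\in L_2(\varphi)$. Your closing remark that the postulates hold only as $\myequiv$ (equality of model sets) and not as $\equiv$ (equality of truth tables), witnessed at $(\omega(\varphi),\omega(\theta))=(1,0)$, is accurate and a useful clarification the paper leaves implicit.
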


\begin{proof}
	\begin{description}
		\item{CI1:} If $\omega((\varphi\land\theta) \lor ((\square_1\varphi\land\square_1\neg\varphi)\land\theta))=1$
		then either $\omega(\varphi\land\theta)=1$ or $\omega((\square_1\varphi\land\square_1\neg\varphi)\land\theta)=1$.
		In the first case, $\omega(\varphi)=\omega(\theta)=1$ thus $\omega\in L_1(\varphi)\cap L_1(\theta)$ and
		this implies by Table \ref{eq:OperatorTable} that $\omega\in L_1(\varphi\ast\theta)$, and so $\omega(\varphi\ast\theta)=1$.
		
		In the second case, we have $\omega(\square_1\varphi\land\square_1\neg\varphi)=\omega(\theta)=1$
		which implies that $\omega\in L_1(\theta)=1$.
		On the other hand,
		\begin{equation*}
			\omega(\square_1\varphi)=1		\implies	\omega(\varphi)\geq 1/2
		\end{equation*}
		and
		\begin{align*}
			\omega(\square_1\neg\varphi)=1		&	\implies	\omega(\neg\varphi)\geq 1/2\\
												&	\implies	\omega(\varphi)\leq 1/2
		\end{align*}
		then
		\begin{equation*}
			1/2\leq \omega(\varphi)\leq 1/2
		\end{equation*}
		thus
		\begin{equation*}
		1/2\leq \omega(\varphi)=1/2
		\end{equation*}
		and $\omega\in  L_2(\varphi)$.
		Hence $\omega\in L_1(\varphi\ast\theta)$ by Table \ref{eq:OperatorTable} and $\omega(\varphi\ast\theta)=1$.
		
		Conversely, if $\omega(\varphi\ast\theta)=1$ then $\omega\in L_1(\varphi\ast\theta)$.
		There are only two cases in Table \ref{eq:OperatorTable} that make this possible: either $\omega\in L_1(\varphi)\cap L_1(\theta)$
		or $\omega\in L_2(\varphi)\cap L_1(\theta)$.
		
		If $\omega\in L_1(\varphi)\cap L_1(\theta)$ then $\omega(\varphi)=\omega(\theta)=1$ and $\omega(\varphi\land\theta)=1$.
		If $\omega\in L_2(\varphi)\cap L_1(\theta)$ then
		$\omega(\square_1\varphi\land \square_1\neg\varphi)=1$ and $\omega(\theta)=1$,
		hence $\omega((\square_1\varphi\land \square_1\neg\varphi)\theta)=1$.
		Thus if $\omega(\varphi\ast\theta)=1$ we have that
		 $\omega((\varphi\land\theta) \lor ((\square_1\varphi\land\square_1\neg\varphi)\land\theta))=1$.

		\item{CI2:} Suppose $\omega$ is an interpretation such that
		$\omega(((\square_1\varphi\land\square_1\neg\varphi)\land\neg\theta)\vee (\neg\varphi\land\neg\theta))=1$.
		Then either $\omega(((\square_1\varphi\land\square_1\neg\varphi)\land\neg\theta))=1$
		or $\omega(\neg\varphi\land\neg\theta)=1$.
		In both cases $\omega(\neg\theta)=1$ which implies that $\omega\in L_3(\theta)$.
		In the first case, $\omega(\square_1\varphi\land\square_1\neg\varphi)=1$ implies that $\omega\in L_2(\varphi)$
		thus $\omega\in L_2(\varphi)\cap L_3(\theta)$.
		In the second case, $\omega(\neg\theta)=1$ implies that $\omega\in L_3(\varphi)\cap L_3(\theta)$.
		In both cases $\omega\in\L_3(\varphi\ast\theta)$ by Table \ref{eq:OperatorTable},
		thus $\omega(\neg(\varphi\ast\theta))=1$.
		
		If $\omega(\neg(\varphi\ast\theta))=1$ then $\omega\in L_3(\varphi\ast\theta)$. The only options
		given in Table \ref{eq:OperatorTable} are $\omega\in L_2(\varphi)\cap L_3(\theta)$
		and $\omega\in L_3(\varphi)\cap L_3(\theta)$.
		\begin{align*}
			\omega\in L_2(\varphi)\cap L_3(\theta)	&	\implies	\omega((\square_1\varphi\land\square_1\neg\varphi)\land\neg\theta)=1\\
			\omega\in L_3(\varphi)\cap L_3(\theta)	&	\implies	\omega(\neg\varphi\land\neg\theta)=1
		\end{align*}
		hence
		\begin{equation*}
			\omega(((\square_1\varphi\land\square_1\neg\varphi)\land\neg\theta)\vee (\neg\varphi\land\neg\theta))=1
		\end{equation*}
	\end{description}
\end{proof}

\begin{lemma}\label{le:CICharacterization2}
	If $\star$ is any binary operator on $\mathcal F$ satisfying CI1 and CI2 then $\star$ is the
	Cautious Improvement operator.
\end{lemma}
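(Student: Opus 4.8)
The plan is to show that for arbitrary formulas $\varphi,\theta$ the output $\varphi\star\theta$ induces exactly the same three-level partition of $\mathcal I_n$ as $\varphi\ast\theta$; by the identification of formulas with ranking functions (Theorem~\ref{th:SyntacticRepresentation}) this is precisely what it means for $\star$ to coincide with the Cautious Improvement operator. The crucial point I want to exploit is that, although the relation $\myequiv$ constrains only the set of (classical) models of a formula — that is, its top level $L_1$ — the two postulates CI1 and CI2 \emph{together} pin down all three levels.

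First I would record that $\myequiv$ is an equivalence relation (it is $\models$ in both directions, and $\models$ is transitive) and that $\alpha\myequiv\beta$ holds iff $L_1(\alpha)=L_1(\beta)$. I then apply CI1 both to $\star$ and, via Lemma~\ref{le:CICharacterization1}, to the Cautious Improvement operator $\ast$: each of $\varphi\star\theta$ and $\varphi\ast\theta$ is $\myequiv$ to the single fixed formula $(\varphi\land\theta)\vee((\square_1\varphi\land\square_1\neg\varphi)\land\theta)$, so by transitivity $\varphi\star\theta\myequiv\varphi\ast\theta$, whence
\[
L_1(\varphi\star\theta)=L_1(\varphi\ast\theta).
\]

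Next I would control the bottom level using CI2 together with the negation truth table. Since $\neg$ sends $1\mapsto 0$, $0\mapsto 1$ and fixes $1/2$, for every formula $\chi$ we have $L_1(\neg\chi)=L_3(\chi)$. Applying CI2 to $\star$ and to $\ast$ (again via Lemma~\ref{le:CICharacterization1}) shows that $\neg(\varphi\star\theta)$ and $\neg(\varphi\ast\theta)$ are both $\myequiv$ to $((\square_1\varphi\land\square_1\neg\varphi)\land\neg\theta)\vee(\neg\varphi\land\neg\theta)$, so $\neg(\varphi\star\theta)\myequiv\neg(\varphi\ast\theta)$ and therefore
\[
L_3(\varphi\star\theta)=L_1(\neg(\varphi\star\theta))=L_1(\neg(\varphi\ast\theta))=L_3(\varphi\ast\theta).
\]

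Finally, since $L_1,L_2,L_3$ partition $\mathcal I_n$, the middle level is forced as $L_2=\mathcal I_n\setminus(L_1\cup L_3)$, so the two equalities above give $L_2(\varphi\star\theta)=L_2(\varphi\ast\theta)$ as well. Thus $\varphi\star\theta$ and $\varphi\ast\theta$ determine the same ranking function, i.e. $\varphi\star\theta\equiv\varphi\ast\theta$; as $\varphi,\theta$ were arbitrary, $\star$ is the Cautious Improvement operator. The only real subtlety — the step I would flag as the main obstacle — is recognizing that $\myequiv$ alone is too weak, since it pins down only $L_1$; the argument must therefore route the control of the third level through CI2 and the behavior of $\neg$, and only then recover $L_2$ by complementation. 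Once this is seen, the proof is a short transitivity-and-partition argument rather than a computation.
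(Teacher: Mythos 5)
Your proof is correct and follows essentially the same route as the paper's: CI1 pins down $L_1$, CI2 (via the behavior of $\neg$) pins down $L_3$, and $L_2$ is then forced. The only cosmetic difference is that the paper carries out the $L_2$ step by explicit truth-value inequalities ($\omega(\varphi\ast\theta)\leq 1/2$ and $\geq 1/2$), whereas you obtain it directly from the fact that the three levels partition $\mathcal I_n$ --- the same elimination argument in slightly cleaner clothing.
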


\begin{proof}
	Suppose $\star$ satisfies
		\[
	\begin{array}{lr}
	(\varphi\land\theta) \vee ((\square_1\varphi\land\square_1\neg\varphi)\land\theta)		\myequiv \varphi\star\theta & \mbox{(CI1)}\\
	((\square_1\varphi\land\square_1\neg\varphi)\land\neg\theta)\vee (\neg\varphi\land\neg\theta)		\myequiv \neg(\varphi\star\theta) &\mbox{(CI2)}
	\end{array}
	\]
	We have to prove that $L_j(\varphi\star\theta)= L_j(\varphi\ast\theta)$ for $j=1,2,3$,
	i.e., that $\varphi\star\theta\equiv\varphi\ast\theta$.
	
	Given any interpretation $\omega$
	\begin{align*}
		\omega\in L_1(\varphi\star\theta)	&	\iff	
		\omega((\varphi\land\theta) \vee ((\square_1\varphi\land\square_1\neg\varphi)\land\theta))=1\\
											&	\iff	\omega\in L_1(\varphi\ast\theta)
											\qquad\text{(Lemma \ref{le:CICharacterization1})}	
	\end{align*}
	similarly,
	\begin{align*}
	\omega\in L_3(\varphi\star\theta)	&	\iff	
										\omega(((\square_1\varphi\land\square_1\neg\varphi)\land\neg\theta)\vee (\neg\varphi\land\neg\theta))=1\\
										&	\iff	\omega\in L_3(\varphi\ast\theta)
										\qquad\text{(Lemma \ref{le:CICharacterization1})}	
	\end{align*}
	
	Finally, suppose $\omega\in L_2(\varphi\star\theta)$.
	As we are assuming that $\star$ satisfies CI1 and CI2, this is equivalent to
	\begin{align*}
		\omega((\varphi\land\theta) \vee ((\square_1\varphi\land\square_1\neg\varphi)\land\theta))	&	\leq 1/2\\
		\text{and}\\
		\omega(((\square_1\varphi\land\square_1\neg\varphi)\land\neg\theta)\vee (\neg\varphi\land\neg\theta))	&	\leq 1/2
	\end{align*}
	which is equivalent, via Lemma \ref{le:CICharacterization1}, to
	\begin{align*}
	\omega(\varphi\ast\theta)	&	\leq 1/2\\
	\text{and}\\
	\omega(\neg(\varphi\ast\theta))	&	\leq 1/2
	\end{align*}
	This happens if, and only if
	\begin{align*}
	\omega(\varphi\ast\theta)	&	\leq 1/2\\
	\text{and}\\
	\omega(\varphi\ast\theta)	&	\geq 1/2
	\end{align*}
	i.e.
	\begin{equation*}
		1/2\leq \omega(\varphi\ast\theta)\leq 1/2
	\end{equation*}
	or, equivalently, $\omega(\varphi\ast\theta)=1/2$
	which is equivalent to $\omega\in L_2(\varphi\ast\theta)$.
	
	Hence $\omega\in L_2(\varphi\star\theta)$ iff $\omega\in L_2(\varphi\ast\theta)$
	thus $L_2(\varphi\star\theta)=L_2(\varphi\ast\theta)$.
	
	Henceforth, $\varphi\star\theta\equiv\varphi\ast\theta$.
\end{proof}

\begin{theorem}\label{th:CharacterizationOfCI}
	Postulates CI1 and CI2 characterize the Cautious Improvement operator.
\end{theorem}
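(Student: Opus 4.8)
The plan is to observe that a characterization theorem of this form asserts a biconditional: a binary operator $\star$ on $\mathcal{F}$ is the Cautious Improvement operator if and only if it satisfies the postulates CI1 and CI2. The proof therefore decomposes into the two implications of this equivalence, and each of them has already been supplied by the preceding lemmas, so the argument reduces to assembling them.

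First I would invoke Lemma~\ref{le:CICharacterization1}, which establishes the forward direction (soundness): the Cautious Improvement operator $\ast$, defined semantically through Table~\ref{eq:OperatorTable}, does in fact satisfy CI1 and CI2. This guarantees that the postulates are genuine properties of $\ast$ and not vacuously imposed. Then I would invoke Lemma~\ref{le:CICharacterization2} for the converse direction (completeness and uniqueness): any binary operator $\star$ on $\mathcal{F}$ satisfying CI1 and CI2 must coincide with $\ast$, in the sense that $\varphi\star\theta\equiv\varphi\ast\theta$ for all formulas $\varphi,\theta$. Chaining the two lemmas yields exactly the claimed equivalence, and the proof is complete.

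The main (and essentially only) obstacle here is conceptual rather than technical: one must be precise about what ``characterize'' means in this setting, namely that CI1 and CI2 pin down $\ast$ uniquely among all binary operators on $\mathcal{F}$, and then confirm that the two lemmas jointly cover both halves of that assertion. All the genuine work, i.e.\ the case analysis over the levels $L_1,L_2,L_3$ and the verification of the induced truth-table entries, has already been carried out in the proofs of Lemma~\ref{le:CICharacterization1} and Lemma~\ref{le:CICharacterization2}, so no further computation is needed at this stage.
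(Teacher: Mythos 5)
Your proposal is correct and matches the paper's own proof exactly: the paper also proves Theorem~\ref{th:CharacterizationOfCI} simply by citing Lemma~\ref{le:CICharacterization1} (the operator $\ast$ satisfies CI1 and CI2) together with Lemma~\ref{le:CICharacterization2} (any $\star$ satisfying CI1 and CI2 coincides with $\ast$). Your added remark about what ``characterize'' means is a fair clarification but introduces nothing beyond the paper's two-lemma assembly.
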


\begin{proof}
	Lemmas \ref{le:CICharacterization1} and \ref{le:CICharacterization2}.
\end{proof}

We can obtain simpler, yet less intuitive, expressions for CI1 and CI2.

\begin{theorem}\label{th:Equivs}
	The following are equivalent, respectively, to CI1 and CI2:
	\begin{align}
	\varphi\ast\theta	&	\equiv	\square_1\varphi\land\theta \label{eq:Equiv1} \tag{CI1'} \\
	\neg(\varphi\ast\theta)	&	\equiv	\square_1\neg\varphi\land\neg\theta \label{eq:Equiv3} \tag{CI2'}
	\end{align}
\end{theorem}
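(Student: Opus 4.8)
The plan is to reduce the statement to two purely computational facts about models, exploiting that $\myequiv$ is an equivalence relation on formulas: $\alpha\myequiv\beta$ holds exactly when $\alpha$ and $\beta$ have the same models (the interpretations assigning them the value $1$), so it is reflexive, symmetric and transitive. Write
\[
F_1 := (\varphi\land\theta)\vee((\square_1\varphi\land\square_1\neg\varphi)\land\theta), \qquad G_1 := \square_1\varphi\land\theta,
\]
so that CI1 reads $F_1\myequiv\varphi\ast\theta$ and CI1$'$ reads $\varphi\ast\theta\myequiv G_1$ (reading its equivalence at the level of models; see the last paragraph for why CI1$'$ and CI2$'$ must be read this way), and likewise put
\[
F_2 := ((\square_1\varphi\land\square_1\neg\varphi)\land\neg\theta)\vee(\neg\varphi\land\neg\theta), \qquad G_2 := \square_1\neg\varphi\land\neg\theta,
\]
so that CI2 reads $F_2\myequiv\neg(\varphi\ast\theta)$ and CI2$'$ reads $\neg(\varphi\ast\theta)\myequiv G_2$. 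By symmetry and transitivity of $\myequiv$, once I show $F_1\myequiv G_1$ it follows for every $\varphi,\theta$ that $F_1\myequiv\varphi\ast\theta$ iff $G_1\myequiv\varphi\ast\theta$, i.e. CI1 and CI1$'$ impose the same condition on $\ast$; the same reduction handles CI2 versus CI2$'$ once $F_2\myequiv G_2$ is known. Thus everything comes down to the two model-equivalences $F_1\myequiv G_1$ and $F_2\myequiv G_2$.

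To prove $F_1\myequiv G_1$ I would compute, for an arbitrary interpretation $\omega$, the condition under which each side takes the value $1$. From the truth table of $\square_1$ one has $\omega(\square_1\alpha)=1$ iff $\omega(\alpha)\in\set{1,1/2}$, and (exactly the fact already used to build the formulas $\varphi_\omega$) $\omega(\square_1\varphi\land\square_1\neg\varphi)=1$ iff $\omega(\varphi)=1/2$. Hence $\omega(G_1)=1$ iff $\omega(\varphi)\geq 1/2$ and $\omega(\theta)=1$. On the other side, $\omega(F_1)=1$ iff one of its two disjuncts equals $1$, that is iff $\omega(\varphi)=\omega(\theta)=1$ (first disjunct) or $\omega(\varphi)=1/2$ and $\omega(\theta)=1$ (second disjunct). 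These two descriptions pick out the same set of interpretations, so $F_1$ and $G_1$ have the same models, giving $F_1\myequiv G_1$.

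The verification of $F_2\myequiv G_2$ is symmetric. Using $\omega(\neg\alpha)\geq 1/2$ iff $\omega(\alpha)\leq 1/2$ together with the $\square_1$ table, $\omega(G_2)=1$ iff $\omega(\varphi)\leq 1/2$ and $\omega(\theta)=0$; while $\omega(F_2)=1$ iff $\omega(\varphi)=1/2$ and $\omega(\theta)=0$ (first disjunct) or $\omega(\varphi)=0$ and $\omega(\theta)=0$ (second disjunct). Again both conditions describe the same interpretations, so $F_2\myequiv G_2$, which completes the argument.

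The only genuine subtlety — and the step to get right — is that these equivalences must be read at the level of models, not of full three-valued truth tables. Indeed $F_1$ and $G_1$ are \emph{not} equal as functions to $\varphi\ast\theta$: at an $\omega$ with $\omega(\varphi)=1$ and $\omega(\theta)=0$ both $F_1$ and $G_1$ evaluate to $0$, whereas $\varphi\ast\theta$ evaluates to $1/2$ by Table \ref{eq:OperatorTable}. So the equivalence in CI1$'$ and CI2$'$ is the model-level relation $\myequiv$, exactly as in CI1 and CI2, and the whole proof is carried out through the classical consequence relation $\vDash$, tracking only the interpretations on which a formula takes the value $1$. Beyond this, the work is the routine case analysis above, where the one place an error could enter is a misreading of the modal table, i.e. the identity $\omega(\square_1\alpha)=1 \iff \omega(\alpha)\geq 1/2$.
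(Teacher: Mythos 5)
Your proof is correct, and its skeleton matches the paper's: both arguments reduce the theorem to an operator-independent equivalence between the left-hand side of CI1 (resp.\ CI2) and $\square_1\varphi\land\theta$ (resp.\ $\square_1\neg\varphi\land\neg\theta$), and then chain with the postulate. The execution differs in two ways worth noting. First, the paper proves the stronger fact that the two formulas have \emph{identical} three-valued truth tables, via distributivity of $\land$ over $\lor$ and the absorption identity $\varphi\lor(\square_1\varphi\land\square_1\neg\varphi)\equiv\square_1\varphi$; you prove only that they have the same models ($F_1\myequiv G_1$, $F_2\myequiv G_2$), which is weaker but exactly what transitivity and symmetry of $\myequiv$ require, and you make the bidirectionality (CI1 $\Leftrightarrow$ CI1$'$) explicit where the paper writes out only one direction. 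Second, your closing observation is a genuine clarification that the paper glosses over: as literally printed, CI1$'$ and CI2$'$ use $\equiv$ (truth-table equality), and under that reading the theorem would be false --- your witness $\omega(\varphi)=1$, $\omega(\theta)=0$, where $\varphi\ast\theta$ evaluates to $1/2$ but $\square_1\varphi\land\theta$ to $0$, is correct --- so the statement must be read with $\myequiv$, which is indeed how the paper's own proof concludes. Your direct model computation also quietly sidesteps a typo in the paper's second algebraic identity, which writes $\varphi\lor(\square_1\varphi\land\square_1\neg\varphi)\equiv\square_1\neg\varphi$ where $\neg\varphi\lor(\square_1\varphi\land\square_1\neg\varphi)\equiv\square_1\neg\varphi$ is meant.
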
	

\begin{proof}
	According to CI1,
	\begin{equation*}
	(\varphi\land\theta)\lor (\square_1\varphi\land\square_1\neg\varphi\land\theta)\myequiv \varphi\ast \theta
	\end{equation*}
	It is easy to  check, using truth tables, that	
	\begin{equation*}
	(\varphi\land\theta)	\lor	(\square_1\varphi\land \square_1\neg\varphi\land\theta)
		\equiv	(\varphi	\lor	(\square_1\varphi\land \square_1\neg\varphi))\land \theta
	\end{equation*}
	and
	\begin{equation*}
	(\varphi	\lor	(\square_1\varphi\land \square_1\neg\varphi))\equiv \square_1\varphi
	\end{equation*}
	hence
	\begin{equation*}
	\varphi\ast\theta\myequiv \square_1\varphi\land\theta
	\end{equation*}
	
	On the other hand, by CI2
	\begin{equation*}
		((\square_1\varphi\land\square_1\neg\varphi)\land\neg\theta)\vee (\neg\varphi\land\neg\theta)		\myequiv \neg(\varphi\ast\theta)
	\end{equation*}
	And again, using truth tables, it is easy to check that
		\begin{equation*}
	(\square_1\varphi\land \square_1\neg\varphi\land\neg\theta)\lor (\neg\varphi\land\neg\theta)
	\equiv	((\square_1\varphi\land \square_1\neg\varphi)\lor\neg\varphi)\land \neg\theta
	\end{equation*}
	and
	\begin{equation*}
	\varphi	\lor	(\square_1\varphi\land \square_1\neg\varphi)\equiv \square_1\neg\varphi
	\end{equation*}
	thus
	\begin{equation*}
	\neg(\varphi\ast\theta)\myequiv \square_1\neg\varphi\land\neg\theta
	\end{equation*}
		
\end{proof}

It is worth to note that Theorem~\ref{th:Equivs} tells us that the cautious improvement operator can be expressed as a formula of of the logic $K^S_3+\lozenge_1 +\lozenge_2$ in terms of their inputs.

\begin{theorem}
	Given any two formulas $\varphi$ and $\theta$:

\begin{align*}
\neg(\varphi\ast\theta)	&	\equiv \neg\varphi\ast\neg\theta \tag{CI3}\\
\text{If $\theta$ is not a contradiction, } & \text{then $\varphi\ast\theta$ is not a contradiction} \label{rule:nocontradiction}\tag{CI4}\\
\varphi\ast\theta	&	\models\theta  \label{rule:Bounding}\tag{CI5}\\
\varphi	&	\models\square_1(\varphi\ast\theta)  \label{rule:NoCountermodels}\tag{CI6}\\
(\varphi\ast\theta)\ast\theta	&	\equiv\theta \label{rule:iteration}\tag{CI7}\\
\theta\ast\theta	&	\equiv \theta  \label{rule:idempotency}\tag{CI8}
\end{align*}

\end{theorem}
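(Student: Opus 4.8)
The plan is to treat $\ast$ purely as the nine-entry truth function $f$ obtained from Table~\ref{eq:OperatorTable} (equivalently, the explicit truth table for $\varphi\ast\theta$ displayed just after it), since by construction $\varphi\ast\theta$ is the formula whose value at $\omega$ is $f(\omega(\varphi),\omega(\theta))$. Each of CI3--CI8 then becomes a finite statement about $f$ together with the truth tables of $\neg$, $\land$, $\lor$ and $\square_1$; the only ambient facts I need are that $\land,\lor$ compute minimum and maximum, that $\neg$ is the order-reversing involution fixing $1/2$, and that $\square_1\psi\geq 1/2$ always with $\square_1\psi=1$ iff $\psi\geq 1/2$. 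One caution guides everything: CI1 and CI2, and their compact forms CI1$'$, CI2$'$, are most safely read as equivalences at the level of models (that is, $\myequiv$, not full $\equiv$), so they pin down only where the output equals $1$. For the genuine pointwise identities (CI3, CI7, CI8) and for the value-$1/2$ statements (CI4, CI6) I will therefore argue from $f$ directly rather than through CI1$'$/CI2$'$.

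I would first dispatch the three ``one-shot'' claims. For \textbf{CI5} ($\varphi\ast\theta\models\theta$), note that $f(a,b)=1$ occurs only in the entries $(1,1)$ and $(1/2,1)$, so a model of $\varphi\ast\theta$ forces $\omega(\theta)=1$. For \textbf{CI6} ($\varphi\models\square_1(\varphi\ast\theta)$), the top row of the table gives $f(1,b)\in\{1,1/2\}$, hence $f(1,b)\geq 1/2$; so whenever $\omega(\varphi)=1$ we get $\omega(\varphi\ast\theta)\geq 1/2$ and therefore $\omega(\square_1(\varphi\ast\theta))=1$ by the defining table of $\square_1$. For \textbf{CI8} ($\theta\ast\theta\equiv\theta$) one simply reads the diagonal $f(1,1)=1$, $f(1/2,1/2)=1/2$, $f(0,0)=0$, i.e. $f(b,b)=b$ for every value $b$.

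Next I would handle \textbf{CI3} and \textbf{CI4}. For CI3 ($\neg(\varphi\ast\theta)\equiv\neg\varphi\ast\neg\theta$) the required identity is $\neg f(a,b)=f(\neg a,\neg b)$ for all $a,b$; conceptually this says the level table is invariant under the involution that swaps $L_1\leftrightarrow L_3$ in \emph{both} inputs and in the output while fixing $L_2$, which one checks on the four swapped pairs of cells and the fixed central cell $(1/2,1/2)$. For CI4, suppose $\theta$ is not a contradiction, so some $\omega$ has $\omega(\theta)\geq 1/2$; since the columns $L_1(\theta)$ and $L_2(\theta)$ of Table~\ref{eq:OperatorTable} never output $L_3$ (equivalently $f(a,b)\geq 1/2$ whenever $b\geq 1/2$), that same $\omega$ witnesses $\omega(\varphi\ast\theta)\geq 1/2$, so $\varphi\ast\theta$ is not a contradiction either.

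The remaining and, I expect, most error-prone step is \textbf{CI7} ($(\varphi\ast\theta)\ast\theta\equiv\theta$), because it is the only claim requiring a two-level evaluation: one must first compute $c:=f(\omega(\varphi),\omega(\theta))$ and then $f(c,\omega(\theta))$, verifying $f(f(a,b),b)=b$ for all nine pairs. The clean way to organise this is to fix the value $b=\omega(\theta)$ and track the column: for $b=1$ the first application lands in $\{1,1/2\}$ and a second application with $b=1$ returns $1$; for $b=1/2$ the first application is always $1/2$ and stays $1/2$; for $b=0$ the first application lands in $\{1/2,0\}$ and a second application with $b=0$ returns $0$. In every case the composite equals $b$, which is exactly the pointwise identity $(\varphi\ast\theta)\ast\theta\equiv\theta$ — the meaningful statement that revising twice by the same input stabilises on that input, and the one place where I would verify each entry by hand rather than appeal to any shortcut.
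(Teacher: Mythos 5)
Your proof is correct and takes essentially the same approach as the paper, whose entire proof is the remark that ``these six properties can be checked out using truth tables'': you simply carry out that truth-table verification explicitly, organizing it around the nine-entry truth function $f$ and checking $\neg f(a,b)=f(\neg a,\neg b)$ for CI3 and $f(f(a,b),b)=b$ for CI7. Your added caution about reading CI1$'$/CI2$'$ as $\myequiv$ rather than full $\equiv$ (and therefore arguing from $f$ directly) is sound but does not change the method.
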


\begin{proof}
	These six properties can be checked out using truth tables.
\end{proof}

Properties \ref{rule:nocontradiction} and \ref{rule:Bounding} respectively
represent the Non Contradiction and the Success principles.
On the other hand, \ref{rule:iteration} and \ref{rule:idempotency} tell us how
 the iterative behavior of this operator works.
It takes exactly two steps to completely replace the original epistemic state
with the new information.

\section{General definition of a belief change operator}\label{general-characterization}

Now we want to define a general belief change operator.
\[
\otimes:\mathcal F\times \mathcal F\longrightarrow \mathcal F
\]

A function
\begin{equation*}
k:\set{1,2,3}^2\longrightarrow \set{1,2,3}
\end{equation*}
defines a binary operator $\otimes$ on three-level preorders over $\mathcal I_n$
given by the table:

\bigskip
\begin{equation}\label{eq:GeneralOperatorTable}
\begin{array}{c|c|c|c}
\omega				&	L_1(\preceq_\theta)				&	L_2(\preceq_\theta)					&	L_3(\preceq_\theta)\\
\hline					
L_1(\preceq_\varphi)	&	L_{k(1,1)}(\preceq_\varphi\otimes\preceq_\theta)	&	L_{k(1,2)}(\preceq_\varphi\otimes\preceq_\theta)	&	L_{k(1,3)}(\preceq_\varphi\otimes\preceq_\theta)	\\
\hline
L_2(\preceq_\varphi)	&	L_{k(2,1)}(\preceq_\varphi\otimes\preceq_\theta)	&	L_{k(2,2)}(\preceq_\varphi\otimes\preceq_\theta)	&	L_{k(2,3)}(\preceq_\varphi\otimes\preceq_\theta)	\\
\hline
L_3(\preceq_\varphi)	&	L_{k(3,1)}(\preceq_\varphi\otimes\preceq_\theta)	&	L_{k(3,2)}(\preceq_\varphi\otimes\preceq_\theta)	&	L_{k(3,3)}(\preceq_\varphi\otimes\preceq_\theta)						
\end{array}
\end{equation}
\bigskip

Given formulas $\varphi$ and $\theta$,
an interpretation $\omega$ is in $L_{k(i,j)}(\preceq_\varphi\otimes\preceq_\theta)$
if and only if $\omega\in L_i(\varphi)\cap L_j(\theta)$.
This defines the preorder $\preceq_\varphi\otimes\preceq_\theta$.

Since the function $k$ determines the operator $\otimes$,
there are $|\set{1,2,3}|^{|\set{1,2,3}|^2}=3^{3^2}=3^9$ possible belief change operators in this setting.
Different functions yield operators with different epistemological attitudes.
The Cautious Improvement operator defined in Section \ref{sec:DefinitionOfCI}, for instance,
gives priority to new information (i.e. to the models of $\theta$) but
does so in a {\em cautious} manner.
For example, if an interpretation $\omega$ is a counter model of $\varphi$ and it is
a model of $\theta$ then it will be, not a model, but a quasi model of $\varphi\ast\theta$.
Another example could be the one given by the following table:
\bigskip
\begin{equation*}\label{eq:DrasticOperatorTable}
\begin{array}{c|c|c|c}
\omega				&	L_1(\preceq_\theta)				&	L_2(\preceq_\theta)					&	L_3(\preceq_\theta)\\
\hline					
L_1(\preceq_\varphi)	&	L_1(\preceq_\varphi\circ\preceq_\theta)	&	L_2(\preceq_\varphi\circ\preceq_\theta)	&	L_3(\preceq_\varphi\circ\preceq_\theta)	\\
\hline
L_2(\preceq_\varphi)	&	L_1(\preceq_\varphi\circ\preceq_\theta)	&	L_2(\preceq_\varphi\circ\preceq_\theta)	&	L_3(\preceq_\varphi\circ\preceq_\theta)	\\
\hline
L_3(\preceq_\varphi)	&	L_1(\preceq_\varphi\circ\preceq_\theta)	&	L_2(\preceq_\varphi\circ\preceq_\theta)	&	L_3(\preceq_\varphi\circ\preceq_\theta)						
\end{array}
\end{equation*}
\bigskip

It is easy to see that $\varphi\circ\theta\equiv\theta$ for every pair of formulas $\varphi$ and $\theta$.
This operator gives absolute priority to the new information.
Also notice that
\begin{equation*}
	\varphi\circ\theta\equiv \varphi\ast^2\theta\equiv\theta
\end{equation*}

Once defined $\otimes$ as a binary operator on preorders, we can use
Theorem \ref{th:SyntacticRepresentation} to define $\otimes$ as a binary operator
on formulas.

From Table \ref{eq:GeneralOperatorTable} we can obtain a syntactic characterization of $\otimes$.
In order to do this, we need to define a set of formulas
\begin{equation*}
	\setdef{\zeta_{ij}^k}{i,j,k\in\set{1,2,3}}
\end{equation*}
as follows.
For every triple $(i,j,k)$:
\begin{enumerate}
	\item If $k\neq k(i,j)$ then $\zeta_{ij}^k=\bot$.
	
	\item If $k=k(i,j)$ then
		\begin{equation*}
		\zeta_{ij}^k=\alpha_{i}^k\land \beta_{j}^k
		\end{equation*}
		where
		\begin{equation*}
		\alpha_{i}^k=
		\begin{cases}
		\varphi											&	\text{ if }	i=1\\
		\square_1\varphi\land	\square_1\neg\varphi	&	\text{ if }	i=2\\
		\neg\varphi										&	\text{ if }	i=3
		\end{cases}
		\end{equation*}
		and
		\begin{equation*}
		\beta_{j}^k=
		\begin{cases}
		\theta											&	\text{ if }	j=1\\
		\square_1\theta\land	\square_1\neg\theta	&	\text{ if }	j=2\\
		\neg\theta										&	\text{ if }	j=3
		\end{cases}
		\end{equation*}
		\end{enumerate}

Observe that for any interpretation $\omega$, regardless the value of $k$,
$\omega(\alpha^k_i)=1$ iff $\omega\in L_i(\varphi)$ and
$\omega(\beta^k_j)=1$ iff $\omega\in L_j(\varphi)$.

Now we are able to give a syntactic definition of $\otimes$.
\begin{theorem}\label{th:GeneralPostulates}
The following postulates characterize the operator $\otimes$:
\begin{align*}
\bigvee_{1\leq i\leq j\leq 3} \zeta_{ij}^1 & \myequiv (\varphi\ast\theta)	\tag{$\otimes$1}\\
\bigvee_{1\leq i\leq j\leq 3} \zeta_{ij}^2 & \myequiv \square_1(\varphi\ast\theta)\land \square_1\neg(\varphi\ast\theta)	\tag{$\otimes$2}\\
\bigvee_{1\leq i\leq j\leq 3} \zeta_{ij}^3 & \myequiv \neg(\varphi\ast\theta)	\tag{$\otimes$3}
\end{align*}	
\end{theorem}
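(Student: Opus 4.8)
The plan is to prove both directions of the characterization, mirroring exactly the two lemmas already used for the Cautious Improvement operator (Lemmas~\ref{le:CICharacterization1} and~\ref{le:CICharacterization2}). The crux is a single observation about the family $\set{\zeta_{ij}^k}$. Since $L_1(\varphi),L_2(\varphi),L_3(\varphi)$ partition $\mathcal I_n$, and likewise for $\theta$, every interpretation $\omega$ lies in exactly one cell $L_{i_0}(\varphi)\cap L_{j_0}(\theta)$. Using the stated property of $\alpha_i^k,\beta_j^k$ (that $\omega(\alpha_i^k)=1$ iff $\omega\in L_i(\varphi)$ and $\omega(\beta_j^k)=1$ iff $\omega\in L_j(\theta)$), I would first establish that $\omega(\zeta_{ij}^k)=1$ for some pair $(i,j)$ if and only if $k=k(i_0,j_0)$, i.e. if and only if $\omega\in L_k(\varphi\otimes\theta)$. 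Indeed $\zeta_{ij}^k=\bot$ contributes $0$ whenever $k\neq k(i,j)$, while $\alpha_i^k\land\beta_j^k$ evaluates to $1$ only at $\omega$'s own cell; since $\bot$ is neutral for $\lor$ in $K_3^S$, this yields the key identity
\[
\omega\Bigl(\bigvee_{i,j}\zeta_{ij}^k\Bigr)=1 \iff \omega\in L_k(\varphi\otimes\theta),
\]
so that the set of models of $\bigvee_{i,j}\zeta_{ij}^k$ is exactly $L_k(\varphi\otimes\theta)$.

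For soundness, I would read the three postulates off this identity together with the truth tables. The models of the output $\varphi\otimes\theta$ are by definition $L_1(\varphi\otimes\theta)$, matching the models of $\bigvee_{i,j}\zeta_{ij}^1$, which gives $(\otimes 1)$. Since $\neg$ sends $0$ to $1$, the models of $\neg(\varphi\otimes\theta)$ are $L_3(\varphi\otimes\theta)$, matching $\bigvee_{i,j}\zeta_{ij}^3$, which gives $(\otimes 3)$. For $(\otimes 2)$ I would use that $\omega(\square_1\psi)=1$ iff $\omega(\psi)\geq 1/2$: hence $\omega$ is a model of $\square_1(\varphi\otimes\theta)\land\square_1\neg(\varphi\otimes\theta)$ iff $\omega(\varphi\otimes\theta)\geq 1/2$ and $\omega(\varphi\otimes\theta)\leq 1/2$, i.e. iff $\omega(\varphi\otimes\theta)=1/2$, i.e. iff $\omega\in L_2(\varphi\otimes\theta)$, matching $\bigvee_{i,j}\zeta_{ij}^2$. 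This is precisely the device already used for the middle level in Lemma~\ref{le:CICharacterization1} and in Theorem~\ref{th:SyntacticRepresentation}.

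For the converse I would argue as in Lemma~\ref{le:CICharacterization2}. Let $\star$ be any binary operator on $\mathcal F$ satisfying $(\otimes 1)$--$(\otimes 3)$; the goal is $\varphi\star\theta\equiv\varphi\otimes\theta$ for all inputs, i.e. equality of the three levels. From $(\otimes 1)$ the models of $\varphi\star\theta$, namely $L_1(\varphi\star\theta)$, coincide with those of $\bigvee_{i,j}\zeta_{ij}^1$, which equal $L_1(\varphi\otimes\theta)$; hence $L_1(\varphi\star\theta)=L_1(\varphi\otimes\theta)$. Symmetrically $(\otimes 3)$, read through the models of $\neg(\varphi\star\theta)$, forces $L_3(\varphi\star\theta)=L_3(\varphi\otimes\theta)$. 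As the three levels partition $\mathcal I_n$, the middle level is then determined, $L_2(\varphi\star\theta)=L_2(\varphi\otimes\theta)$ (one may instead derive this directly from $(\otimes 2)$ via the $\square_1$ computation above). Thus $\star$ agrees with $\otimes$ on every input.

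The point requiring care, rather than a genuine obstacle, is the behaviour of the disjunction under the strong Kleene semantics: one must verify that the $\bot$-valued cells are neutral for $\lor$ and that at most one conjunction $\alpha_i^k\land\beta_j^k$ is true at a given $\omega$, so the disjunction never spuriously jumps to $1$ from a non-matching cell; this is exactly where the partition property of the levels is used. The only other delicate ingredient is the level-$2$ encoding $\square_1(\cdot)\land\square_1\neg(\cdot)$, whose purpose is to single out the value $1/2$, and which is the reason the modality $\square_1$ appears in the postulates. Everything else reduces to the truth tables and can be checked mechanically.
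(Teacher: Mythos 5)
Your proposal is correct and follows essentially the same route as the paper: the paper also splits the argument into a soundness lemma showing that the models of $\bigvee_{i,j}\zeta_{ij}^k$ are exactly $L_k(\preceq_\varphi\otimes\preceq_\theta)$ (via the cell decomposition $L_i(\varphi)\cap L_j(\theta)$ and the fact that the right-hand sides of $(\otimes 1)$--$(\otimes 3)$ have as models precisely the three levels of the output), and a uniqueness lemma, proved ``analogously to Lemma~\ref{le:CICharacterization2},'' which pins down $L_1$ and $L_3$ from the postulates and recovers $L_2$ as the remaining part of the partition. Your explicit attention to the neutrality of $\bot$ under strong Kleene disjunction and to the $\square_1(\cdot)\land\square_1\neg(\cdot)$ encoding of the value $1/2$ only makes explicit what the paper leaves implicit.
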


We divide the proof of Theorem \ref{th:GeneralPostulates} into two lemmas:

\begin{lemma}
	The operator $\otimes$ defined by the Table \ref{eq:GeneralOperatorTable} satisfies
	Postulates $\otimes1$, $\otimes2$ and $\otimes3$.
\end{lemma}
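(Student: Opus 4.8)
The plan is to show that each postulate is an equality of \emph{model sets}, since $\myequiv$ only demands that the two sides have exactly the same models (truth value $1$). So for each $k\in\set{1,2,3}$ I would compute the set of interpretations at which $\bigvee_{i,j}\zeta_{ij}^k$ takes the value $1$ and compare it with the models of the corresponding right-hand side. Throughout I read the right-hand sides as referring to the operator $\otimes$ under characterization, and I read the index of each disjunction as ranging over all pairs $(i,j)\in\set{1,2,3}^2$, so that every cell of Table \ref{eq:GeneralOperatorTable} is accounted for.

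First I would pin down the models of a single $\zeta_{ij}^k$. If $k\neq k(i,j)$ then $\zeta_{ij}^k=\bot$ has no models. If $k=k(i,j)$ then $\zeta_{ij}^k=\alpha_i^k\land\beta_j^k$, and since in the Kleene strong conjunction $\omega(P\land Q)=1$ holds exactly when $\omega(P)=\omega(Q)=1$, the observation recorded just before the theorem (namely $\omega(\alpha_i^k)=1\iff\omega\in L_i(\varphi)$ and $\omega(\beta_j^k)=1\iff\omega\in L_j(\theta)$) gives that the models of $\zeta_{ij}^k$ are exactly $L_i(\varphi)\cap L_j(\theta)$. Next I would assemble the disjunction: in the strong three-valued logic $\omega(\bigvee_m\chi_m)=1$ iff some disjunct takes value $1$, so the models of $\bigvee_{i,j}\zeta_{ij}^k$ are the union, over those pairs $(i,j)$ with $k(i,j)=k$, of the sets $L_i(\varphi)\cap L_j(\theta)$. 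By the defining Table \ref{eq:GeneralOperatorTable}, an interpretation lies in $L_{k(i,j)}(\varphi\otimes\theta)$ precisely when it lies in $L_i(\varphi)\cap L_j(\theta)$; hence this union is exactly $L_k(\varphi\otimes\theta)$.

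It then remains to check that the three right-hand sides cut out $L_1$, $L_2$ and $L_3$ of $\varphi\otimes\theta$, respectively. The models of $\varphi\otimes\theta$ are $L_1(\varphi\otimes\theta)$ by definition of $L_1$; the models of $\neg(\varphi\otimes\theta)$ are $L_3(\varphi\otimes\theta)$ since $\neg$ sends $0$ to $1$; and for $\otimes2$ I would invoke the $\square_1$ table, which yields $\omega(\square_1\psi)=1$ iff $\omega(\psi)\geq 1/2$, so that $\omega(\square_1(\varphi\otimes\theta)\land\square_1\neg(\varphi\otimes\theta))=1$ iff simultaneously $\omega(\varphi\otimes\theta)\geq 1/2$ and $\omega(\varphi\otimes\theta)\leq 1/2$, i.e. iff $\omega\in L_2(\varphi\otimes\theta)$. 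Matching each of these model sets with the one obtained for the disjunction then delivers $\otimes1$, $\otimes2$ and $\otimes3$.

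The only genuinely non-mechanical point is the level-$2$ case: isolating the value $1/2$ is impossible in plain $K_3^S$ and requires exactly the $\square_1\psi\land\square_1\neg\psi$ device, which is the same trick already used to build $\alpha_2^k$. I therefore expect that to be the step worth stating with care, whereas the rest is bookkeeping with the $\land$ and $\lor$ tables restricted to the value $1$, mirroring the structure of the proof of Lemma \ref{le:CICharacterization1}.
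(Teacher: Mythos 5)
Your proposal is correct and follows essentially the same route as the paper's proof: both arguments reduce each postulate to an equality of model sets, use the observation that $\omega(\alpha_i^k)=\omega(\beta_j^k)=1$ exactly when $\omega\in L_i(\varphi)\cap L_j(\theta)$, and then invoke Table \ref{eq:GeneralOperatorTable} to identify the models of the disjunction with $L_k(\varphi\otimes\theta)$, matching them against the models of $\varphi\otimes\theta$, $\square_1(\varphi\otimes\theta)\land\square_1\neg(\varphi\otimes\theta)$, and $\neg(\varphi\otimes\theta)$ respectively. Your reading of the disjunction as ranging over all pairs $(i,j)\in\set{1,2,3}^2$ and your explicit verification of the level-$2$ case are sensible clarifications of details the paper leaves implicit, but they do not change the structure of the argument.
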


\begin{proof}
	Let us denote:
	\begin{align*}
		\phi_1	&	=	\varphi\ast\theta\\
		\phi_2	&	=	\square_1(\varphi\ast\theta)\land \square_1\neg(\varphi\ast\theta)\\
		\phi_3	&	=	\neg(\varphi\ast\theta)
	\end{align*}
	Notice that for any interpretation $\omega$
	\begin{equation*}
		\omega(\phi_k)=1 \iff \omega \in L_k(\varphi\ast\theta)
	\end{equation*}

		Fix $k\in\set{1,2,3}$. Suppose $\omega$ is an interpretation that satisfies $\bigvee_{1\leq i\leq j\leq 3} \zeta_{ij}^k$.
		Thus, there is a pair $i,j$ such that $\zeta_{ij}^k\neq \bot$, i.e., $\zeta_{ij}^k$
		is a conjunction $\alpha_{i}^k\land \beta_{j}^k$ such that
		\begin{equation*}
			\omega(\alpha_{i}^k\land \beta_{j}^k)=1.
		\end{equation*}
		That is, $\omega(\alpha_{i}^k)=\omega(\beta_{j}^k)=1$, hence $\omega\in L_i(\varphi)\cap L_j(\theta)$ and 
		thus $\omega\in L_{k(i,j)}(\preceq_\varphi\otimes\preceq_\theta)$ by Table \ref{eq:GeneralOperatorTable}
		and $\omega(\phi_k)=1$.
		
		Conversely, suppose that $\omega(\phi_k)=1$, then $\omega \in L_k(\varphi\ast\theta)$.
		According to the definition of $\otimes$ by Table \ref{eq:GeneralOperatorTable} there is a pair $i,j$
		such that $k=k(i,j)$ thus $\omega\in L_i(\varphi)\cap L_j(\theta)$.
		This implies that $\omega(\alpha_{i}^k\land \beta_{j}^k)=1$, hence
		\begin{equation*}
			\omega\left(\bigvee_{1\leq i\leq j\leq 3} \zeta_{ij}^k\right)=1
		\end{equation*}
\end{proof}

\begin{lemma}
If an operator $\star$ satisfies Postulates $\otimes1$, $\otimes2$ and $\otimes3$ then $\star$ and $\otimes$
are the same operator.
\end{lemma}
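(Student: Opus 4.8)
The plan is to lean on the observation that the left-hand side of each postulate is a formula in $\varphi$ and $\theta$ alone. Indeed, the family $\{\zeta_{ij}^k\}$ is defined purely from the inputs $\varphi,\theta$ and the fixed function $k$, with no reference to any operator; hence each disjunction $\Theta_k:=\bigvee_{1\le i,j\le 3}\zeta_{ij}^k$ (for $k=1,2,3$) is one fixed formula. The preceding lemma guarantees that $\otimes$ itself satisfies $\otimes1$, $\otimes2$, $\otimes3$, so both $\otimes$ and the hypothetical operator $\star$ are pinned to these same three formulas $\Theta_1,\Theta_2,\Theta_3$. Transitivity of $\myequiv$ will then force the two outputs to agree, and the task reduces to checking that this agreement holds on every level of the induced preorder.

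First I would fix arbitrary $\varphi,\theta$ and compare level $1$. Postulate $\otimes1$, applied to $\star$ and to $\otimes$, gives $\varphi\star\theta\myequiv\Theta_1\myequiv\varphi\otimes\theta$; since $\myequiv$ means coincidence of models, this is exactly $L_1(\varphi\star\theta)=L_1(\varphi\otimes\theta)$. Next I would treat level $3$ through $\otimes3$: it yields $\neg(\varphi\star\theta)\myequiv\Theta_3\myequiv\neg(\varphi\otimes\theta)$, hence $L_1(\neg(\varphi\star\theta))=L_1(\neg(\varphi\otimes\theta))$. Because $\neg$ sends $0$ to $1$, the models of $\neg\psi$ are precisely the countermodels of $\psi$, i.e. $L_1(\neg\psi)=L_3(\psi)$; thus $L_3(\varphi\star\theta)=L_3(\varphi\otimes\theta)$. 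Since the three levels partition $\mathcal I_n$ and levels $1$ and $3$ already coincide, the middle level must coincide as well, giving $L_2(\varphi\star\theta)=L_2(\varphi\otimes\theta)$. (The same ``sandwich'' move appears in Lemma~\ref{le:CICharacterization2}; alternatively $\otimes2$ settles level $2$ directly, using $\omega(\square_1\psi\land\square_1\neg\psi)=1\iff\omega\in L_2(\psi)$, so $\otimes2$ is in fact redundant for the uniqueness direction.) With all three levels matched we obtain $\varphi\star\theta\equiv\varphi\otimes\theta$, and as $\varphi,\theta$ were arbitrary, $\star$ and $\otimes$ are the same operator.

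The one genuine subtlety --- which I would flag as the main obstacle --- is that $\myequiv$ is strictly weaker than $\equiv$: it records only agreement of models (level $1$), not of the full three-valued truth table, so one cannot naively read off one level from each postulate. The resolution is to notice that $\neg$ interchanges levels $1$ and $3$, turning the bare model-agreement delivered by $\otimes3$ into genuine level-$3$ information, after which a partition of $\mathcal I_n$ into three blocks is determined once two of the blocks are fixed. This is exactly the structural trick that made the analogous uniqueness argument for the Cautious Improvement operator go through, and the general case needs nothing more.
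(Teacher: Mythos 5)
Your proof is correct and takes essentially the same route as the paper, whose own proof is simply ``analogous to Lemma~\ref{le:CICharacterization2}'': postulates $\otimes1$ and $\otimes3$ pin down $L_1$ and $L_3$ via model-agreement and transitivity of $\myequiv$, and the middle level follows because the three levels partition $\mathcal I_n$ (the paper phrases this last step as a sandwich of truth values, which is the same argument). Your side remark that $\otimes2$ is redundant for uniqueness is a correct and worthwhile observation, but it does not change the approach.
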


\begin{proof}
	Analogous to the proof of Lemma \ref{le:CICharacterization2}.
\end{proof}

%

\section{Related works}\label{related-works}
There are some interesting works linking modal logic and belief revision. Namely, the works of Giacomo Bonanno \cite{Bon05,Bon07}.
The aim of these works is to give modal logics in which the process of revision can be simulated and the original postulates of the AGM framework
are satisfied. Our aim is different in at least two aspects. First, the processes of change we want to capture in our model differ from
the changes proposed in the AGM or KM (Katsuno-Mendelzon) framework where the believes before the change (and after it) are represented either as a classical propositional theory or as  a classical propositional formula. We model change in a bit more complex structures. The purpose of the 3-valued logic with modalities  we use is to be able to represent these structures through the formulas of the logic and the semantics associated.
The second aspect in which our approach differs from those works  is in  our postulates, which come naturally from the complex process we want to model.
We may say that our approach to the syntactic postulates is  oriented by semantics and this leads to the discovery of the axioms.

It is worth to note that our modalities of type one ($\lozenge_1$ and $\square_1$) are related to ideas on improvement operators
\cite{KP08,KMP10,MP13}. Actually, $\square_1$ is similar to the operator of one improvement introduced in those works: the effect of this modality consists of improving by one degree  the plausibility of all worlds, whenever possible.  Of course, our structures of three levels do not allow to simulate completely
operators as the one-improvement operator, because  these operators can create new levels in the process of revision. In order to simulate that, we should have structures with at least $3^n$ levels, the maximal length of a total preorder on $\mathcal{I}_n$.

Note that Hans Root in his work {\em Shifting Priorities: Simple Representations for Twenty-Seven Iterated Theory Change Operators} \cite{Rot09}
analyses the behavior of 27 change operators looking at the changes in spheres' systems (alias total preorders). Our work is reminiscent of this approach 
and the subtitle of our work is a sort of tribute to his work.

\section{Final remarks and future research}\label{final-remarks}

\subsection{Remarks}

This work is a first step in order to have a logic in which formulas represent complex epistemic states, such as ranking functions on interpretations, a generalization of total preorders. We have concentrated here in ranking functions taking three values: acceptation, rejection and indetermination.
Mainly, there are three contributions in this work which we want to remark here:

\begin{enumerate}
\item We have defined a modal expansion of the Kleene's strong 3-valued logic
that allows us to describe three-level preorders associated with
any formula $\varphi$ in this logic. Level $L_1(\varphi), L_2(\varphi)$
and $L_3(\varphi)$ are respectively the sets of models, quasi models and countermodels of $\varphi$.

\item We have defined semantically all the change operators ($3^9$) and, what is most important, we have introduced a technique to characterize them syntactically in the  Kleene's strong 3-valued logic with modalities.
    
\item We have concentrated in a particular, natural  and meaningful operator, called cautious improvement operator. This operator has been characterized through two postulates. Moreover, there is a formula of the new logic constructed in terms of two formulas which are the input of the operator (the old  beliefs and the new piece of information) such that this formula captures the output of the operator (Theorem~\ref{th:Equivs}).

\item We have characterized all possible change operators (19,683 in total) in the three-level structures following the methodology used for characterizing the cautious improvement operator.
\end{enumerate}

\subsection{Future work}
The main issues we want to develop next are the following:
\begin{itemize}
\item To define a complete proof theory  for Kleene's strong 3-valued logic expanded with $\lozenge_1$ and $\lozenge_2$.

\item To characterize a class of operators in which the new piece of information does not worsen, that is, it shifts downwards.

\item To understand when the operators have a representation as in Theorem~\ref{th:Equivs}, that is, when 
the formula corresponding to $\varphi\ast\theta$ can  be expressed
in terms of $\varphi$ and $\theta$ within $K_3^S+\lozenge_1+\lozenge_2$.

\item To generalize the Kleene's strong 3-valued logic with modalities to a modal  $3^n$-valued logic  which can capture all the ranking functions into $3^n$ values and find the way to encode in such a logic all known classes of operators.

\end{itemize}


\appendix

\section{Proof of Theorem \ref{the:NotEnoughModality1}}\label{sec:ProofOfLongTheorem}

We consider $K_3^S+\lozenge_i$ for $i=1,2$ with only one variable.
	We have the interpretations $\omega_1,\omega_2,\omega_3$
	such that
	$\omega_1(x)=1$, $\omega_2(x)=1/2$
	and $\omega_3(x)=0$.
	Consider the set $\mathcal P=\set{\seq{P}{26}}$ of all  preorders on
	these three interpretations, being $P_0$ the preorder
	\begin{center}
		\begin{tabular}{c}
			$\omega_3$\\
			$\omega_2$\\
			$\omega_1$
		\end{tabular}
	\end{center}
	This is represented by the formula $\varphi_0:=x$ in the sense that
	$\omega_1(\varphi_0)=1$, $\omega_2(\varphi_0)=1/2$ and $\omega_3(\varphi_0)=0$.

	\begin{enumerate}
		\item In order to prove this result for $K_3^S+\lozenge_1$,
		we define the operations $\neg, \square_1$ and $\lor$ on the set of preorders.
		Given preorders $P,P'$:
		\begin{enumerate}
			\item $L_1(\neg P)=L_3(P)$, $L_2(\neg P)=L_2(P)$ and $L_3(\neg P)=L_1(P)$.
			\item $L_1(\square_1P)=L_1(P)\cup L_2(P)$, $L_2(\square_1 P)=L_3(P)$ and $L_3(\square_1 P)=\emptyset$.
			\item If interpretation $\omega$
			is at level $L_i(P)$ and level $L_j(P')$,
			then $\omega$ is at level $L_k(P\lor P')$
			where $k=\min\set{i,j}$.
			
		\end{enumerate}
		Let us call $\overline{P_0}^1$ the closure of the set $\set{P_0}$ under these three operations
		and $\overline{P_0}_k^1$ the set of the preorders obtained after
		$k$ or less successive applications of $\set{\neg,\square_1,\lor}$.
		
		Let us say the preorder $P$ belongs to the set $\mathcal F_1$ if and only if it has one of the
		following forms:
		\begin{gather}
		\begin{array}{c}
		\omega\\
		\omega'\\
		\omega_2
		\end{array}
		\qquad
		\begin{array}{c}
		\omega_2\\
		\omega'\\
		\omega
		\end{array}
		\qquad
		\begin{array}{c}
		\omega\\
		\emptyset\\
		\omega'\omega''
		\end{array}
		\qquad
		\begin{array}{c}
		\omega'\omega''\\
		\emptyset\\
		\omega
		\end{array}
		\end{gather}

		We will prove that no preorder in $\mathcal F_1$ belongs to $\overline{P_0}^1$.
		We do it by induction in the number of steps, being our base case
		the application of the operations to $P_0$.
		These yield:
		\begin{equation}\label{eq:BaseCaseNonSufficient1}
		\neg P_0=
		\begin{tabular}{c}
		$\omega_1$\\
		$\omega_2$\\
		$\omega_3$
		\end{tabular}
		\qquad
		\square_1 P_0=
		\begin{tabular}{c}
		$\emptyset$\\
		$\omega_3$\\
		$\omega_1\omega_2$
		\end{tabular}
		\qquad
		P_0\lor P_0=P_0
		\end{equation}
		Obviously none of those preorders belongs to $\mathcal F_1$.
		
		Now suppose that no preorder in $\mathcal F_1$ belongs to $\overline{P_0}_k^1$.
		
		Given a sequence $P_{i_0},P_{i_1},\dots, P_{i_k},P_{i_{k+1}}$ of preorders with $i_0=0$
		such that $P_{i_{j+1}}$ is obtained from $P_{i_j}$ by an application of one of our operators,
		suppose, for a contradiction, that $P_{i_{k+1}}$ is in $\mathcal F_1$.
		Then:
		\begin{enumerate}
			\item If $P_{i_{k+1}}=\neg P_{i_{k}}$ then $P_{i_{k}}$ must be in $\mathcal F_1$ which is a contradiction.
			
			\item If $P_{i_{k+1}}=\square_1 P_{i_{k}}$ then $L_3(P_{i_{k+1}})$ is empty
			thus it is not a linear order and it has a level which is empty other than $L_2(P_{i_{k+1}})$
			hence $P_{i_{k+1}}$ is not in $\mathcal F_1$.

			\item If $P_{i_{k+1}}=P_{i_{k}}\lor P'$ with $P'$ a preorder in $\overline{P_0}_k^1$:
			\begin{itemize}
				\item Suppose $P_{i_{k+1}}$ is a linear order with $\omega_2$ in $L_3$, i.e.
				
				\begin{center}
					$P_{i_{k+1}}$=
					\begin{tabular}{c}
						$\omega_2$\\
						$\omega$\\
						$\omega'$
					\end{tabular}
				\end{center}
				then $\omega_2$ must be in both $L_3(P_{i_{k}})$ and $L_3(P')$ and $\omega'$ must be in $L_1(P_{i_{k}})$
				or $L_1(P')$.
				Without loss of generality, we may assume that $\omega_2\in L_3(P')$ and $\omega'\in L_1(P')$.
				Then necessarily either $\omega\in L_2(P')$ and $P'=P_{i_{k+1}}$
				or $L_2(P')=\emptyset$ in $P'$.
				In both cases, we can conclude that $P'\in\mathcal{F}_1$.
				
				\item Similarly, if $P_{i_{k+1}}$ is a linear order with $\omega_2$ in $L_1$
				
				\begin{center}
					$P_{i_{k+1}}$=
					\begin{tabular}{c}
						$\omega$\\
						$\omega'$\\
						$\omega_2$
					\end{tabular}
				\end{center}
				then $\omega$ must be in both $L_3(P_{i_{k}})$ and $L_3(P')$ and $\omega_2$ must be either in $L_1(P_{i_{k}})$
				or $L_1(P')$. Again, suppose without loss of generality that $P'$ has $\omega$ in $L_3$ and $\omega_2$ in $L_1$.
				If $\omega'$ is not in $L_2$ for $P'$ then this level is empty and if $\omega'$ is in $L_2$ then
				$P'=P_{i_{k+1}}$. In both cases, we have $P'\in\mathcal F_1$.
				
				\item If $P_{i_{k+1}}$ has the form
				
				\begin{center}
					\begin{tabular}{c}
						$\omega$\\
						$\emptyset$\\
						$\omega'$ $\omega''$
					\end{tabular}
				\end{center}
				then necessarily:
				\begin{itemize}
					\item $\omega$ belongs to level $L_3(P_k)$ and $L_3(P')$.
					\item $\omega'$ must be either in $L_1(P_k)$ or $L_1(P')$.
					\item $\omega$ must be either in $L_1(P_k)$ or $L_1(P')$.
				\end{itemize}
				
				Suppose that $\omega'$ and $\omega''$ are in $L_1(P')$, then
				
				\begin{center}
					$P'=$
					\begin{tabular}{c}
						$\omega$\\
						$\emptyset$\\
						$\omega'$ $\omega''$
					\end{tabular}
				\end{center}
				thus $P'\in\mathcal F_1$.
				If this is not the case then necessarily
				
				\begin{center}
					$P'=$
					\begin{tabular}{c}
						$\omega$\\
						$\omega'$\\
						$\omega''$
					\end{tabular}
					\quad
					or
					\quad
					$P'=$
					\begin{tabular}{c}
						$\omega$\\
						$\omega''$\\
						$\omega'$
					\end{tabular}
				\end{center}
				We examine only the case
				
				\begin{center}
					$P'=$
					\begin{tabular}{c}
						$\omega$\\
						$\omega'$\\
						$\omega''$
					\end{tabular}
				\end{center}
				since the other one is symmetric.
				In this case, necessarily
				\begin{center}
					$P_k=$
					\begin{tabular}{c}
						$\omega$\\
						$\omega''$\\
						$\omega'$
					\end{tabular}
				\end{center}
				then one of those two preorders is linear with $\omega_2$ in one of the extreme levels, hence one of them
				is in $\mathcal{F}_1$
				
				\item Finally, suppose that If $P_{i_{k+1}}$ has the form
				
				\begin{center}
					\begin{tabular}{c}
						$\omega'$ $\omega''$\\
						$\emptyset$\\
						$\omega$
					\end{tabular}
				\end{center}
				
				then both $P_k$ and $P'$ must have $\omega'$ and $\omega''$ in $L_3$
				and one of them has to have $\omega$ in $L_1$. Suppose that $P'$ satisfies these two properties,
				then $P'=P_{i_{k+1}}$ and as a consequence $P'\in\mathcal F_1$.
				
			\end{itemize}
			As we can see, on each case we contradict the inductive hypothesis, thus it is not possible that
			$P_{i_{k+1}}\in\mathcal F_1$ without incurring in a contradiction. Hence, $P_{i_{k+1}}\notin\mathcal F_1$
			thus no element of $\mathcal F_1$ belongs to the closure of $P_0$ under the given operations.
			
			As a consequence of this, there are preorders in $\mathcal P$ that have no corresponding formula
			in $K_3^S+\lozenge_1$.

		\end{enumerate}
		
		\item Now to prove that $K_3^S+\lozenge_2$ , in the finite case, cannot capture all the preorders,
		we define a new operator $\square_2$ for every preorder $P$ as
		\begin{align}
		L_1(\square_2 P)	&	=	L_1(P) \cup L_2(P)\\
		L_2(\square_2 P)	&	=	\emptyset\\
		L_3(\square_2 P)	&	=	L_3(P)
		\end{align}

		Let us denote by $\overline{P_0}^2$ the closure of $\set{P_0}$ under operators $\neg, \square_2, \lor$
		and by  $\overline{P_0}^2_k$ the set of the preorders obtained by $k$ or less successive applications of these operators.
		Let $\mathcal F_2$ denote the set of the preorders with one of
		the following configurations:
		\begin{center}
			\begin{tabular}{c}
				$\omega_2$\\
				$\omega'$\\
				$\omega$
			\end{tabular}
			\quad
			\begin{tabular}{c}
				$\omega$\\
				$\omega'$\\
				$\omega_2$
			\end{tabular}
			\quad
			\begin{tabular}{c}
				$\emptyset$\\
				$\omega\omega'\omega''$\\
				$\emptyset$
			\end{tabular}
			\quad
			\begin{tabular}{c}
				$\emptyset$\\
				$\omega$\\
				$\omega_2\omega'$
			\end{tabular}
			\quad
			\begin{tabular}{c}
				$\omega_2\omega'$\\
				$\omega$\\
				$\emptyset$
			\end{tabular}
			\quad
			\begin{tabular}{c}
				$\emptyset$\\
				$\omega\omega'$\\
				$\omega''$
			\end{tabular}
			\quad
			\begin{tabular}{c}
				$\omega''$\\
				$\omega\omega'$\\
				$\emptyset$
			\end{tabular}
		\end{center}

		We will prove, using induction in the number of steps, that $P\in\overline{P_0}^2$
		implies that $P$ is not in $\mathcal F_2$.

		Our base case consists of applying this operators to $P_0$.
		As we saw in Equation \ref{eq:BaseCaseNonSufficient1},
		neither $\neg P_0$ nor $P_0\lor P_0$ belong to $\mathcal F_2$.
		On the other hand
		\begin{equation*}
		\square_2 P_0=
		\begin{tabular}{c}
		$\omega_3$\\
		$\emptyset$\\
		$\omega_1\omega_2$
		\end{tabular}
		\end{equation*}
		which is clearly not in $\mathcal F_2$.
		
		For the inductive hypothesis, assume that no preorder in $\overline{P_0}^2_k$ belongs to $\mathcal F_2$.
		
		Suppose we have a sequence $P_{i_0},P_{i_1},\dots, P_{i_k},P_{i_{k+1}}$ of preorders with $i_0=0$
		such that $P_{i_{j+1}}$ is obtained from $P_{i_j}$ by an application of one of our operators
		and assume for a contradiction that $P_{i_{k+1}}$ is in $\mathcal F_2$. Hence:
		
		\begin{enumerate}
			\item If $P_{i_{k+1}}=\neg P_{i_{k}}$ it is immediate that $P_{i_{k+1}}\in\mathcal F_2$
			iff $P_{i_{k}}\in\mathcal F_2$.
			
			\item Suppose $P_{i_{k+1}}=\square_2 P_{i_{k}}$, then $L_2(P_{i_{k+1}})$ is empty, thus
			$P_{i_{k+1}}$ does not belong to $\mathcal F_2$.
			
			\item If $P_{i_{k+1}}=P_{i_{k}}\lor P'$ with $P'$ a preorder in $\overline{P_0}_k^2$ we need to consider
			seven cases.
			
			\begin{enumerate}
				\item If $P_{i_{k+1}}$ has the form
				
				\begin{center}
					\begin{tabular}{c}
						$\omega_2$\\
						$\omega'$\\
						$\omega$
					\end{tabular}
				\end{center}
				then $\omega_2$ belongs to $L_3(P_{i_{k}})$ and $L_3(P')$ and $\omega$ must be either in
				$L_1(P_{i_{k}})$ or $L_1(P')$. If $\omega$ is in $L_1(P')$ then $P'$ must have the form
				\begin{center}
					\begin{tabular}{c}
						$\omega_2\omega'$\\
						$\emptyset$\\
						$\omega$
					\end{tabular}
				\end{center}
				since otherwise it would be in $\mathcal F_2$. But then $\omega'$ must be in $L_2(P_{i_{k}})$ and
				as a consequence we obtain that $P_{i_{k}}$ is in $\mathcal F_2$ wherever we place $\omega$
				because $P_{i_{k}}$ would be in one of the following forms
				\begin{center}
					\begin{tabular}{c}
						$\omega_2$\\
						$\omega'$\\
						$\omega$
					\end{tabular}
					\qquad
					\begin{tabular}{c}
						$\omega_2$\\
						$\omega'\omega$\\
						$\emptyset$
					\end{tabular}
					\qquad
					\begin{tabular}{c}
						$\omega_2\omega$\\
						$\omega'$\\
						$\emptyset$
					\end{tabular}
				\end{center}
				
				\item If $P_{i_{k+1}}$ is in the form
				\begin{center}
					\begin{tabular}{c}
						$\omega$\\
						$\omega'$\\
						$\omega_2$
					\end{tabular}
				\end{center}
				then $\omega\in L_3(P_{i_k})$ and $\omega\in L_3(P')$
				and $\omega_2\in L_1(P_{i_k})$ or $\omega_2\in L_1(P')$.
				Suppose $\omega\in L_3(P')$ and $\omega_2\in L_1(P')$,
				then we must have $\omega'\in L_3(P')$ or, otherwise,
				$P'$ belongs to $\mathcal F_2$ thus $P'$ must have the form
				\begin{center}
					\begin{tabular}{c}
						$\omega\omega'$\\
						$\emptyset$\\
						$\omega_2$
					\end{tabular}
				\end{center}
				But in this case we must have $\omega'\in L_2(P_{i_k})$
				and this implies that $P_{i_k}$
				is in one of the following forms:
				\begin{center}
					\begin{tabular}{c}
						$\omega$\\
						$\omega'$\\
						$\omega_2$
					\end{tabular}
					\qquad
					\begin{tabular}{c}
						$\omega\omega_2$\\
						$\omega'$\\
						$\emptyset$
					\end{tabular}
					\qquad
					\begin{tabular}{c}
						$\omega$\\
						$\omega_2\omega'$\\
						$\emptyset$
					\end{tabular}
				\end{center}
				hence $P'\notin\mathcal F_2\implies P_{i_k}\in\mathcal F_2$.
				
				\item Suppose $P_{i_{k+1}}$ has the form
				
				\begin{center}
					\begin{tabular}{c}
						$\emptyset$\\
						$\omega\omega'\omega''$\\
						$\emptyset$
					\end{tabular}
				\end{center}
				If neither $P_{i_{k}}$ nor $P'$ have this form, there are two possible combinations that yield this:
				\begin{center}
					\begin{tabular}{c}
						$\omega$\\
						$\omega'\omega''$\\
						$\emptyset$
					\end{tabular}
					\quad
					$\lor$
					\quad
					\begin{tabular}{c}
						$\omega'$\\
						$\omega\omega''$\\
						$\emptyset$
					\end{tabular}
				\end{center}
				and
				\begin{center}
					\begin{tabular}{c}
						$\omega\omega'$\\
						$\omega''$\\
						$\emptyset$
					\end{tabular}
					\quad
					$\lor$
					\quad
					\begin{tabular}{c}
						$\omega''$\\
						$\omega\omega'$\\
						$\emptyset$
					\end{tabular}
				\end{center}
				but every preorder in the form
				\begin{center}
					\begin{tabular}{c}
						$\omega''$\\
						$\omega\omega'$\\
						$\emptyset$
					\end{tabular}
				\end{center}
				belongs to $\mathcal F_2$.
				
				\item When $P_{i_{k+1}}$ is in the form
				\begin{center}
					
					\begin{tabular}{c}
						$\emptyset$\\
						$\omega$\\
						$\omega_2,\omega'$
					\end{tabular}
					
				\end{center}
				there are two sub cases to consider.
				
				First, suppose without loss of generality that $\omega_2$ and $\omega'$ belong to $L_1(P')$
				then, as $P'$ is not in $\mathcal F_2$, we must have
				\begin{center}
					$P'=$
					\begin{tabular}{c}
						$\omega$\\
						$\emptyset$\\
						$\omega_2\omega'$
					\end{tabular}
				\end{center}
				In this case $\omega$ must be at $L_2(P_{i_k})$.
				If $P_{i_k}$ is linear $\omega_2$ is in $L_1(P_{i_k})$ or $L_3(P_{i_k})$
				and it is in $\mathcal F_2$. Thus suppose $P_{i_k}$ is not linear,
				hence it must have $\omega_2$ and $\omega'$ at the same level
				and wherever we put $\omega_2,\omega'$ we obtain that $P_{i_k}$ is in $\mathcal F_2$.
				
				Suppose now, without loss of generality, that $\omega_2$ is in $L_1(P_{i_k})$ and $\omega'$ is in $L_1(P')$.
				Observe that $P_{i_k}$ cannot be linear in this case and that we already examined the case when
				$\omega_2$ and $\omega'$ are both in $L_1(P_{i_k})$, thus we can suppose $\omega$ and $\omega'$ are at the same level, which
				 must be $L_2$ or $L_3$. If $\omega$ and $\omega'$ are in $L_2(P{i_k})$ then
				\begin{center}
					$P{i_k}=$
					\begin{tabular}{c}
						$\emptyset$\\
						$\omega\omega'$\\
						$\omega_2$
					\end{tabular}
				\end{center}
				which belongs to $\mathcal F_2$.
				If, on the other hand, $\omega$ and $\omega'$ are in $L_3(P{i_k})$ we have that
				\begin{center}
					$P{i_k}=$
					\begin{tabular}{c}
						$\omega\omega'$\\
						$\emptyset$\\
						$\omega_2$
					\end{tabular}
				\end{center}
				thus $\omega$ is in $L_2(P')$ and $P'$ must be in one of the following forms	
				\begin{center}
					\begin{tabular}{c}
						$\omega' $\\
						$\omega$\\
						$\omega_2$
					\end{tabular}
					\qquad
					\begin{tabular}{c}
						$\emptyset$\\
						$\omega\omega' $\\
						$\omega_2$
					\end{tabular}
					\qquad
					\begin{tabular}{c}
						$\emptyset$\\
						$\omega$\\
						$\omega_2\omega' $
					\end{tabular}
				\end{center}
				and all of these forms are in $\mathcal F_2$.
				
				\item Suppose $P_{i_{k+1}}$ has the form
				\[
				\begin{array}{c}
				\omega_2\omega'\\
				\omega\\
				\emptyset
				\end{array}
				\]
				
				In this case we must have $\omega_2,\omega'\in L_3(P_{i_k})$ and $\omega_2,\omega'\in L_3(P')$
				and necessarily either $\omega\in L_2(P_{i_k})$ or $\omega\in L_2(P')$.
				Thus either $P_{i_{k+1}}=P_{i_k}$ or $P_{i_{k+1}}=P'$ which is not possible
				because then $P_{i_{k+1}}$ can be obtained in $k$ steps and it contradicts the inductive hypothesis.

				\item If $P_{i_{k+1}}$ has the form
				\begin{center}
					\begin{tabular}{c}
						$\omega''$\\
						$\omega\omega'$\\
						$\emptyset$
					\end{tabular}
				\end{center}
				then necessarily $L_1(P_{i_{k}})=L_1(P')=\emptyset$ and $\omega''$ is in $L_3(P_{i_{k}})$ and $L_3(P')$.
				If neither $P_{i_{k}}$ nor $P'$ belong to $\mathcal F_2$ the only combination yielding the given configuration is
				\begin{center}
					\begin{tabular}{c}
						$\omega\omega'' $\\
						$\omega'$\\
						$\emptyset$
					\end{tabular}
					$\lor$
					\begin{tabular}{c}
						$\omega'\omega''$\\
						$\omega$\\
						$\emptyset$
					\end{tabular}
				\end{center}
				as at least one of them have $\omega_2$ at level $L_3$,
				then at least one of them is in $\mathcal F_2$.
				
				\item If $P_{i_{k+1}}$ has the form
				\[
				\begin{array}{c}
				\emptyset\\
				\omega\omega'\\
				\omega''
				\end{array}
				\]
				Suppose, without loss of generality, that $\omega,\omega'\in L_2(P')$ then
				necessarily $P'\in\mathcal{F}_2$ since it must be in one of the forms
				\begin{gather*}
				\begin{array}{c}
				\emptyset\\
				\omega\omega'\\
				\omega''
				\end{array}
				\qquad
				\begin{array}{c}
				\emptyset\\
				\omega\omega'\omega''\\
				\emptyset
				\end{array}
				\qquad
				\begin{array}{c}
				\omega''\\
				\omega\omega'\\
				\emptyset
				\end{array}
				\end{gather*}

				Thus we can assume that $\set{\omega,\omega'}\nsubseteq L_2(P')$
				and $\set{\omega,\omega'}\nsubseteq L_2(P_{i_k})$.
				Since $\set{\omega,\omega'}= L_2(P_{i_{k+1}})$
				it must also be the case that
				$\set{\omega,\omega'}\nsubseteq L_3(P')$
				and $\set{\omega,\omega'}\nsubseteq L_3(P_{i_k})$
				because if, for instance, we have that $\set{\omega,\omega'}\subseteq L_3(P')$
				then we need $\set{\omega,\omega'}\subseteq L_2(P_{i_k})$.
				
				We have a case with
				$\omega\in L_3(P_{i_k})$, $\omega'\in L_2(P_{i_k})$,
				$\omega\in L_2(P')$ and $\omega'\in L_3(P')$
				(the other possible case is symmetric)
				but then necessarily either
				$\omega''\in L_1(P_{i_k})$ or $\omega''\in L_1(P')$.
				Suppose $\omega''\in L_1(P')$ (the other case is, again, symmetric)
				thus we have one of the following cases:
				
				One possibility is
				\begin{equation*}
				P_{i_{k+1}}=
				\begin{array}{c}
				\omega'\\
				\omega\\
				\omega''
				\end{array}
				\lor
				\begin{array}{c}
				\omega\\
				\omega'\\
				\omega''
				\end{array}
				\end{equation*}
				in this case, we have at least one linear order with $\omega_2$ at an extreme level, hence
				at least one of the preorders is in $\mathcal{F}_2$.
				
				Another possible configuration is
				\begin{equation*}
				P_{i_{k+1}}=
				\begin{array}{c}
				\omega'\\
				\omega\\
				\omega''
				\end{array}
				\lor
				\begin{array}{c}
				\omega\\
				\omega'\omega''\\
				\emptyset
				\end{array}
				\end{equation*}
				but in this case $P_{i_{k}}\in\mathcal{F}_2$.
				
				The remaining possibility is
				\begin{equation*}
				P_{i_{k+1}}=
				\begin{array}{c}
				\omega'\\
				\omega\\
				\omega''
				\end{array}
				\lor
				\begin{array}{c}
				\omega\omega''\\
				\omega'\\
				\emptyset
				\end{array}
				\end{equation*}
				If $\omega'\neq\omega_2$ and $\omega''\neq\omega_2$ then $\omega=\omega_2$
				and
				\begin{equation*}
				P_{i_{k}}=
				\begin{array}{c}
				\omega_2\omega''\\
				\omega'\\
				\emptyset
				\end{array}
				\in
				\mathcal{F}_2
				\end{equation*}

			\end{enumerate}

		\end{enumerate}
		
	\end{enumerate}
	
%
%
%
%
%
%
%
%
%
%
%

\bibliographystyle{plain}
\bibliography{biblio}


\end{document}